\documentclass[12pt]{amsart}
\usepackage{amsmath}
\usepackage{amssymb}
\usepackage[latin1]{inputenc}
\usepackage[T1]{fontenc}
\usepackage{graphicx}
\pagestyle{plain}

%Definitions
\newtheorem{definition}{Definition}

\newtheorem{proposition}{Proposition}

\def\proofsec#1{\par\vskip5pt\noindent{\bf#1}\\}

\newcommand{\bra}{\langle}
\newcommand{\ket}{\rangle}

\newcommand{\N}{\mathbb{N}}
\newcommand{\R}{\mathbb{R}}
\newcommand{\Rl}{\mathbb{R}}
\newcommand{\T}{\mathbb{T}}
\newcommand{\Z}{\mathbb{Z}}
\newcommand{\be}{\begin{equation}}
\newcommand{\eeq}{\end{equation}}
\newcommand{\bet}{\begin{equation*}}
\newcommand{\eeqt}{\end{equation*}}
\newcommand{\bea}{\begin{eqnarray}}
\newcommand{\eeqa}{\end{eqnarray}}
\newcommand{\beat}{\begin{eqnarray*}}
\newcommand{\eeqat}{\end{eqnarray*}}

\newcommand{\hil}{\mathcal{H}}

\newcommand{\lh}{\mathcal{B(H)}}
\newcommand{\schtn}[1]{{\mathcal T}_{#1}({\mathcal H})}
\newcommand{\tc}{\schtn{}}

\newcommand{\hs}{\schtn2}
\newcommand{\LL}{{\mathcal L}}
\newcommand{\phaseS}{X}
\newcommand{\Cn}{{\mathcal C}_0}
\newcommand{\kh}{{\mathcal{ K(H)}}}

%position spectral measure
%momentum spectral measure
 %norm

\newcommand{\G}{\mathsf{G}}

\newcommand{\tr}[1]{\mathrm{tr}\left[ {#1} \right]}

\def\idty{{\leavevmode\rm 1\mkern -5.4mu I}} %  unit operator
\def\duality#1#2{\langle#1,\,#2\rangle}
\def\zc#1{(Z\,#1)}

\setlength{\textwidth}{17cm} \addtolength{\textheight}{5.0cm}
\addtolength{\voffset}{-2.4cm} \addtolength{\hoffset}{-2.0cm}

\begin{document}
\title{Characterization of informational completeness for covariant phase space observables}

\author{J. Kiukas}
\address{Institute for Theoretical Physics, University of Hannover, Hannover, Germany}
\email{jukka.kiukas@itp.uni-hannover.de}
\author{P. Lahti}
\address{Turku Centre for Quantum Physics, Department of Physics and Astronomy, University of Turku, Turku, Finland}
\email{pekka.lahti@utu.fi}
\author{J. Schultz}
\address{Turku Centre for Quantum Physics, Department of Physics and Astronomy, University of Turku, Turku, Finland}
\email{jussi.schultz@utu.fi}
\author{R. F. Werner}
\address{Institute for Theoretical Physics, University of Hannover, Hannover, Germany}
\email{reinhard.werner@itp.uni-hannover.de}

\begin{abstract}
A covariant phase space observable is uniquely characterized by a positive operator of trace one and, in turn, by the Fourier-Weyl transform of this operator. We study three properties of such observables, and characterize them in terms of the zero set of this transform. The first is informational completeness, for which it is necessary and sufficient that the zero set has dense complement. The second is a version of informational completeness for the Hilbert-Schmidt class, equivalent to the zero set being of measure zero, and the third, known as regularity, is equivalent to the zero set being empty. We give examples demonstrating that all three conditions are distinct. The three conditions are the special cases for $p=1,2,\infty$ of a more general notion of $p$-regularity defined as the norm density of the span of translates of the operator in the Schatten-$p$ class. We show that the relation between zero sets and $p$-regularity can be mapped completely to the corresponding relation for functions in classical harmonic analysis.
\end{abstract}

\maketitle

\section{Introduction}
The significance of informationally complete phase space observables is undisputed from both foundational and practical aspects of quantum mechanics. The wide variety of applications includes such issues as quantization and continuous variable quantum tomography, where these observables provide one of the main tools for state reconstruction procedures (see e.g. \cite{Leonhardt}). In spite of their importance, it has remained an open and somewhat controversial question which of the covariant phase space observables are informationally complete. In this paper the question is finally settled.

For simplicity we take just one degree of freedom, so the phase space is $\phaseS=\Rl^2$, and the associated Hilbert space is $\hil=\LL^2(\Rl)$. Covariant phase space observables associate a probability density $f_\rho(x)$ on $\phaseS$ to every density operator $\rho$ on $\hil$ according to the formula
\begin{equation}\label{density}
    f_\rho(x)=\tr {\rho\,W(x)TW(x)^*},
\end{equation}
where $x=(q,p)$ is a point in phase space, $W(x)$ is the corresponding Weyl operator, and $T$ is a positive operator with trace 1, which characterizes the observable. The key practical question is how we can reconstruct $\rho$ from the measured density $f_\rho$. The observable is called {\it informationally complete} \cite{Prugovecki1977}, if different states can be distinguished, i.e., the map $\rho\mapsto f_\rho$ is injective. The nature of this task, a sort of non-commutative deconvolution,  becomes clearer if we take the symplectic Fourier transform of \eqref{density}. Known properties of Weyl operators (see below) make this
\begin{equation}\label{Fourier}
\int\frac{dq'\,dp'}{2\pi}\ e^{-i(qp'-q'p)}\, f_\rho(q',p')  = \tr{\rho W(q,p)}\ \overline{\tr{TW(q,p)}}.
\end{equation}
Hence we are asked to reconstruct the function $(q,p)\mapsto\tr{\rho W(q,p)}$ given the right hand side of this equation. Clearly, this will work if $\tr{TW(q,p)}$ is non-zero. Given the nature of $\rho$, some zeros can be tolerated, but how many exactly is not obvious. This is the subject of our paper.

Let us introduce the zero set
\begin{equation}\label{Z}
    Z({T}) =\{ x\in\phaseS  \mid \tr{T W(x)}=0\}.
\end{equation}
There are three obvious possibilities to formalize ``smallness'' of $Z(T)$, an algebraic, a measure theoretic and a topological way:
\begin{itemize}
\item[\zc1]  $Z({T})$ is empty,
\item[\zc2] $Z({T})$ is of measure zero,
\item[\zc3] $Z({T})$ contains no open set, i.e., has dense complement.
\end{itemize}
Trivially, \zc1$\Rightarrow$\zc2$\Rightarrow$\zc3. Since the Weyl transform $\tr{\rho W(x)}$ is a continuous function, one immediately sees that the weakest condition \zc3 is sufficient to guarantee the informational completeness of the observable. We will show (Prop.~\ref{propZ} below) that this is also a necessary condition. In addition, we
demonstrate that neither of the obvious implications can be reversed, correcting thus some of the earlier statements claiming the necessity of condition \zc2  \cite{Ali1977,Busch1989,Busch1995,Healy1995}, or even condition \zc1 \cite{DAriano2004}.
Since \zc3 is indeed a necessary condition, this also shows, contrary to some formal state reconstruction formulas
\cite{Wunsche1997}, that not all covariant phase space observables can be used in quantum tomography. For instance, a phase space observable generated by a slit state $T= \vert \varphi\rangle\langle\varphi \vert$, with a compactly supported $\varphi$, is not informationally complete.

\section{Notations and preliminaries}

For the relevant Hilbert space $\hil=\LL^2 (\R)$, we let  $\lh$, $\hs$ and $\tc$,  equipped with the norms $\Vert\cdot\Vert_\infty$, $\Vert\cdot\Vert_2$, and $\Vert\cdot\Vert_1$, denote the spaces of bounded, Hilbert-Schmidt, and trace class operators on $\hil$, respectively.
We also consider the Schatten classes $\schtn p$ for $1\leq p<\infty$ with the norm $\Vert A\Vert_p=\tr{\vert A\vert^p }^{1/p}$ \cite[Ch.~IX.4]{ReedSimon2}. These contain the Hilbert-Schmidt and trace class as $p=2$ and $p=1$. By $\kh$ we denote the space of compact operators, and by $\Cn(\phaseS)$ the space of continuous complex valued functions on $\phaseS$ vanishing at infinity.

Let $Q$ and $P$ denote the standard operators for the phase space coordinates. For concreteness, we can think of position and momentum of a spinless particle confined to move in one dimension, or quadrature components of a one-mode electromagnetic field. (From the point of view of state reconstruction, the latter application is more relevant.) We recall that the Weyl operators  $W(q,p)= e^{i\frac{qp}{2}} e^{-iqP}e^{ipQ}$, $(q,p)\in\phaseS$, constitute an irreducible projective unitary representation of the phase space translations acting in $\LL^2(\R)$.

A covariant phase space observable is a normalized positive operator valued measure, which assigns to each Borel set $M\subset \phaseS$ a positive operator $\G(M)$ such that $\G$ is $\sigma$-additive in the weak*-topology, $\G(\phaseS)=\idty$, and the covariance condition
\begin{equation}\label{covG}
W(q,p) \G (M) W(q,p)^* = \G(M+ (q,p))
\end{equation}
holds for all Borel sets $M$.
Such observables are necessarily \cite{Holevo1979, Werner1984} of the form $\G(M)=\int_{x\in M}dx\ W(x)TW(x)^*$, with a positive trace class operator $T$ with trace one, which is to say that the measurement outcome probability density $f_\rho$ is given by \eqref{density}.

The proofs of our main results rest heavily on quantum harmonic analysis on phase space \cite{Werner1984}. Therefore we present the basic definitions and results needed in the proof. The idea is to extend the definitions of convolution and Fourier transform to combinations of operators and functions. With this convolution $\LL^1(\phaseS)\oplus\tc$ becomes a $\Z_2$-graded commutative Banach algebra, meaning that, for functions $f,g$ and operators $A,B$, $f*g$ and $A*B$ are functions and $f*A=A*f$ is an operator. The associated Fourier transform, which turns the convolution into a product of functions, is the symplectic Fourier transform on the function part, and the Weyl transform $\widehat T(x)=\tr{TW(x)}$ on the operator part. This formalism is helpful to us, because the density $f_\rho$ is then simply the convolution of $\rho$ with $T$, of which \eqref{Fourier} is just the Fourier transform.

One of the classic themes of classical harmonic analysis are the mapping properties of function spaces. In the extended structure this becomes a correspondence theory by which spaces of functions on phase space (assumed to be closed under phase space translations) are associated with spaces of operators. The moral is that while quantum-classical correspondences between individual observables are ``fuzzy'' and generally depend on the choice of some parameters, the correspondence between translation invariant {\it spaces} of functions and operators is canonical. In this paper we just need the instances:
\begin{equation}\label{correspond}
    \begin{array}{rcl}
    \LL^1(\phaseS) & \leftrightarrow & \schtn 1\\
        \LL^p(\phaseS) & \leftrightarrow & \schtn p\\
      \LL^\infty(\phaseS)&\leftrightarrow&\lh \\
      \Cn(\phaseS)&\leftrightarrow&\kh\end{array}
\end{equation}
Here the double arrow indicates that the convolution of an element on one side by an arbitrary trace class operator gives an element on the other side. As customary for functions the convolution is extended here from $\LL^1(\phaseS)\oplus\tc$ to allow one factor from $\LL^\infty(X)$ or $\lh$.

The convolution in some sense was defined above already by describing its Fourier transform. To give a direct definition let us fix some more notation. Throughout the paper we use $x=(q,p)\in\phaseS$ for phase space points and the scaled Lebesgue measure  $dx= (2\pi)^{-1}dq\,dp$.  Let $\alpha_x$ denote the automorphism induced by phase space translations, i.e., $(\alpha_x  f)(y)=f(y-x)$ for $f\in \LL^\infty (\phaseS)$ and $\alpha_x(A)= W(x)AW(x)^*$ for $A\in\lh$. The map $x\mapsto \alpha_x$ is strongly continuous on $\LL^p(\phaseS)$, $\schtn p$ for $1\leq p<\infty$, and on $\Cn(\phaseS)$ and $\kh$. It is weak*-continuous on $\LL^\infty(\phaseS)$ and $\lh$. The phase space inversion of a function is written by a subscript ``$-$'', so $(g_-)(x)=g(-x)$. Its operator analog is
$S_- =\Pi S\Pi$, where $\Pi$ is the parity operator. We can then write the usual convolution of integrable functions in two equivalent ways, which suggest the extensions to trace class operators $A,B$:
\begin{eqnarray}\label{convols}
  (f*g)(y)&=&\int f(x) g(y-x)\, dx = \int f(x)(\alpha_y g_-)(x)\, dx  \nonumber\\
  f*A=A*f&=& \int f(x) \alpha_x(A)\, dx \\
  (A*B)(y)&=&\tr{A\alpha_y(B_-)} \nonumber
\end{eqnarray}
It is a crucial fact of the theory, based on the square integrability of the Weyl operators, that $A*B$ is always integrable.
In fact, the integral of $A*B$ is given by
\begin{equation}\label{integral}
\int(A*B)(x) \, dx = \tr{A}\tr{B}.
\end{equation}
For extending the convolution to one merely bounded (but not integrable or trace-class) factor, we use the duality relation
\begin{equation}\label{duality}
\int f_-(x) (A*B)(x)\, dx = \tr{A_- (f*B)}=(A*(f*B))(0)
\end{equation}
which follows immediately from the definitions. This is an identity for integrable/trace class elements. If $f$ is merely bounded, the first expression still makes sense, and thus we define the convolution $f*B$ by the second expression. That is equivalent to taking the integral \eqref{convols} in the weak* sense. For $A\in\lh$ we can proceed similarly, but in this case the expressions in \eqref{convols} can also be taken literally. The correspondences \eqref{correspond} are then associated with the norm estimates
\begin{equation}\label{inequalities1}
\Vert A*S\Vert_1\leq \Vert A\Vert_1 \Vert S\Vert_1,\qquad \Vert A*S\Vert_p\leq \Vert A\Vert_p \Vert S\Vert_1 , \qquad \Vert A*S\Vert_\infty\leq \Vert A\Vert_\infty \Vert S\Vert_1.
\end{equation}

For a convenient definition for the Fourier transform we identify the dual $\widehat{\phaseS}$ with $\phaseS$ via the symplectic form $\{ (q,p),(q',p')\} = q'p-qp'$. The Fourier transform  of an integrable function is then defined as $\widehat{f}(y) = \int e^{-i\{x,y\}} f(x) \, dx $. For operators the Weyl transform $\widehat{S}(y) = \tr{SW(y)}$ for $S\in\tc$ has the equivalent role. In particular, in each of the above cases the Fourier transform maps convolutions into products, namely, $\widehat{f*g} = \widehat{f}\,\widehat{g}$, $\widehat{A*S}  = \widehat{A}\,\widehat{S}$, and $\widehat{f*S}= \widehat{f}\widehat{S}$. All of the standard results of harmonic analysis also  hold. We make explicit use of the Plancherel theorem which states that  the maps $f\mapsto \widehat{f}$ and $S\mapsto \widehat{S}$ extend to Hilbert space unitaries $\LL^2(\phaseS)\to \LL^2(\phaseS)$ and $\hs\to \LL^2(\phaseS)$.

\section{Regularity, density and injectivity}

In his classic paper Wiener \cite{Wiener} connected two conditions on the zero set of the Fourier transform of a function $f$, analogous to \zc1 and \zc2, with the property that the translates of $f$ should span an appropriate function space. It turns out that such density conditions are precisely what is needed also in the quantum case. This motivates Def.~\ref{pregular} below. Moreover, we will show that $\infty$-regularity of an operator $T\in\tc$ is equivalent to informational completeness of the phase space observable it generates (see Prop.~\ref{prop3}, condition (3.3)). The connection with zero sets will be discussed in Sect.~\ref{sec:zero}.

\begin{definition}\label{pregular}
 For $1\leq p <\infty$, we say that $T\in\schtn p$ is {\bf$p$-regular} if the linear span of $\{\alpha_x(T)\mid x\in \phaseS\}$ is dense in $\schtn p$. Similarly, a function $f\in\LL^p(\phaseS)$ is called $p$-regular, if its translates span a norm dense subspace of $\LL^p(\phaseS)$.  $1$-regular elements are just called {\bf regular}. \\
 $T\in\lh$ (resp.\ $f\in\LL^p(\phaseS)$) is called {\bf $\infty$-regular} if the span of translates is weak*-dense. \end{definition}

For $1\leq p\leq p'\leq\infty$ we have the inclusions
\begin{equation}\label{schtnInclude}
    \tc\subset\schtn{p}\subset\schtn{p'}\subset\lh,
\end{equation}
with $\|\cdot \|_\infty \leq \|\cdot\|_{p'}\leq \|\cdot\|_{p}\leq \|\cdot\|_1$; hence $p$-regularity implies $p'$-regularity.

The following three propositions characterize $p$-regularity for the quantum case. In fact, Prop.~\ref{prop2} covers the open interval $1<p<\infty$, and the endpoints $p=1$ and $p=\infty$ are stated separately as Propositions \ref{prop1} and \ref{prop3}, respectively. The reason is that the $\schtn p$ spaces for these endpoints are not reflexive, so there is an ambiguity in what one might understand under the Schatten class $\schtn\infty$, a notation we therefore avoid: should it be $\lh$, the dual of the other endpoint $\tc=\schtn1$, or should it be its predual, the space $\kh$ of compact operators, since all other $\schtn p$ consist of compact operators? So, for example, condition (2.3) with $p=1$ is (1.3) with the understanding  $\schtn\infty\mapsto\lh$, and (2.2) turns into (3.7) for $\schtn\infty\mapsto\kh$. Prop.~\ref{prop3} also has additional statements connecting weak*-density in $\lh$ with norm-density in $\kh$. We remark that this option exists also for the definition of $\infty$-regularity: it can be stated equivalently as the norm density of the translates in $\kh$.

To emphasize the common features we first state the three propositions and then give the proofs, using parallel arguments as much as possible. The spectral characterizations in terms of zero sets are given in Prop.~\ref{propZ}.

\begin{proposition}\label{prop1}
Let $T\in \tc$. Then the following conditions are  equivalent.
\begin{itemize}
\item[(1.0)] $T$ is regular.
\item[(1.1)] If $f\in \LL^\infty(\phaseS)$ and $f*T=0$, then $f=0$.
\item[(1.2)] The set $\tc*T$ is dense in $\LL^1(\phaseS)$.
\item[(1.3)] If $A\in\lh$ and $A*T=0$, then $A=0$.
\item[(1.4)] The set $\LL^1(\phaseS)*T$ is dense in $\tc$.
\item[(1.5)] $T*T$ is regular.
\item[(1.6)] For some (resp.\ all) regular $T_0\in\tc$,  $T*T_0$ is regular.
\end{itemize}
Moreover, there exists a regular operator $T\in \tc$.
\end{proposition}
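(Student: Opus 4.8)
My plan is to treat (1.0)--(1.6) as two easy ``dual'' clusters joined through a single spectral pivot, namely the nonvanishing of the continuous function $\widehat T(x)=\tr{T W(x)}$. First the soft part: because $\lh=\tc^{*}$ and $\LL^{\infty}(\phaseS)=\LL^{1}(\phaseS)^{*}$, a translation-invariant subspace is norm dense iff its annihilator is $\{0\}$, so each density statement is Hahn--Banach dual to an injectivity statement. The only computation is the annihilator identity $\tr{A\,\alpha_x(T)}=(A*T_-)(x)$, read off from the definition of $A*B$ in \eqref{convols}, together with $A*T=0\Leftrightarrow A_-*T_-=0$ (which comes from $\Pi W(x)\Pi=W(-x)$); substituting these and \eqref{duality} into the density criteria yields (1.0)$\Leftrightarrow$(1.3)$\Leftrightarrow$(1.4) and (1.1)$\Leftrightarrow$(1.2). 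Keeping track of the parity ``$-$'' is the only thing to watch here.

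The engine for the rest is a module form of Wiener's theorem: if $g\in\LL^{1}(\phaseS)$ is a regular function and $A\in\lh$ satisfies $A*g=0$, then $A=0$. I would prove this from covariance, $A*(\alpha_x g)=\alpha_x(A*g)=0$, so that $A*h=0$ on the dense span of translates of $g$ and hence, by $\|A*h\|_{\infty}\le\|A\|_{\infty}\|h\|_{1}$ from \eqref{inequalities1}, for \emph{all} $h\in\LL^{1}(\phaseS)$; taking $h=h_n$ an approximate identity and using the weak*-continuity of $x\mapsto\alpha_x(A)$ gives $A=\lim_n A*h_n=0$. To feed this lemma I need one explicitly regular operator, so I would settle the existence clause first and directly: for the vacuum $T_0=\vert\psi_0\rangle\langle\psi_0\vert$ the transform $\widehat{T_0}(x)=\langle\psi_0,W(x)\psi_0\rangle$ is a nowhere-vanishing Gaussian (and $T_0$ is parity invariant). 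Since this particular $\widehat{T_0}$ is \emph{smooth}, from $A*T_0=0$, i.e. $\widehat A\,\widehat{T_0}=0$, the local division $\langle\widehat A,\psi\rangle=\langle\widehat A\,\widehat{T_0},\,\psi/\widehat{T_0}\rangle=0$ for $\psi\in C_c^{\infty}$ forces $\widehat A=0$; thus $T_0$ verifies (1.3) and is regular.

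With the lemma and a regular $T_0$ in hand the pivot drives everything. If $\widehat T$ is nowhere zero, then $g:=T*T_0\in\LL^{1}(\phaseS)$ has $\widehat g=\widehat T\,\widehat{T_0}$ nowhere zero, so $g$ is a regular function by the classical Wiener theorem \cite{Wiener}; given any $A\in\lh$ with $A*T=0$ I get $A*g=(A*T)*T_0=0$ and conclude $A=0$ from the lemma, establishing (1.3). The same nonvanishing makes $\widehat{T*T}=\widehat T^{\,2}$ and $\widehat{T*T_0}$ zero-free, so Wiener yields (1.5) and (1.6). For the converse I would detect a hypothetical zero: if $\widehat T(x_0)=0$, then the plane wave $e^{i\{x_0,\cdot\}}\in\LL^{\infty}(\phaseS)$ and a suitable Weyl operator both convolve $T$ to zero (their transforms are the point mass at $x_0$), so (1.1) and (1.3) fail, while $\widehat{T*T}$ vanishes at $x_0$ and Wiener denies regularity of $T*T$, so (1.5) and (1.6) fail. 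Hence all seven conditions are equivalent to $\widehat T$ being nowhere zero, the packaging of this as $Z(T)=\emptyset$ being left to Prop.~\ref{propZ}.

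I expect the genuine difficulty to be concentrated in the implication from nonvanishing of $\widehat T$ to the operator-side conditions (1.0), (1.3), (1.4): for a general $T$ the function $\widehat T$ is merely continuous, so one cannot divide by it, and the naive distributional argument that works for the smooth Gaussian of the vacuum is unavailable. The device that rescues this---convolving with the regular vacuum to manufacture a regular $\LL^{1}$ function $T*T_0$ and then running the approximate-identity/weak*-continuity argument---is exactly the quantum-to-classical bridge, and making the module and parity conventions, the weak* limits, and the role of the existence clause consistent across all of (1.0)--(1.6) is where the care is needed; everything else reduces to the classical Tauberian input and soft duality.
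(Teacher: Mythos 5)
Your proposal is correct in its main thrust, but its architecture is genuinely different from the paper's. The paper proves (1.0)--(1.6) equivalent by arguments \emph{internal} to the convolution algebra: adjoint-range duality for (1.1)$\Leftrightarrow$(1.2), (1.3)$\Leftrightarrow$(1.4) and (1.0)$\Leftrightarrow$(1.3); associativity of convolution for (1.1)$\Rightarrow$(1.3) and (with a regular $T_0$ in hand) (1.3)$\Rightarrow$(1.1); the covariance identity $\alpha_x(T'*T)=\alpha_x(T')*T$ for (1.5)/(1.6)$\Rightarrow$(1.2); and an approximation argument for (1.2)$\Rightarrow$(1.5)/(1.6). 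Wiener's theorem enters exactly once, to produce a regular operator: $\widehat{T_0*T_0}=\widehat{T_0}^{\,2}$ is a zero-free Gaussian, so $T_0*T_0$ is a regular function, giving (1.5) and hence (1.2) for $T_0$. The zero-set characterization is deliberately postponed to Prop.~\ref{propZ}. You instead make $Z(T)=\emptyset$ the pivot and prove Prop.~\ref{prop1} and Prop.~\ref{propZ}(1) simultaneously, using classical Wiener in both directions together with your ``module Wiener lemma'' (which is correct: covariance, the bound $\Vert A*h\Vert_\infty\le\Vert A\Vert_\infty\Vert h\Vert_1$, and weak*-convergence $A*h_n\to A$ for an approximate identity are all legitimate in the paper's framework). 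What the paper's route buys is uniformity in $p$: the same soft proof yields Props.~\ref{prop1}--\ref{prop3} verbatim, which matters because for $1<p<2$ zero sets do \emph{not} determine $p$-regularity (the Lev--Olevskii-based proposition in Sect.~\ref{sec:zero}), so your pivot-through-$Z(T)$ strategy cannot serve as the template for Prop.~\ref{prop2}. What your route buys is that Prop.~\ref{propZ}(1) comes for free.

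Two spots need repair, both fixable with tools you already have. First, you never close the loop on the (1.1)$\Leftrightarrow$(1.2) cluster: you show that a zero of $\widehat T$ kills (1.1) (the plane-wave argument, which is fine), but the converse implication, nonvanishing $\Rightarrow$ (1.1), is nowhere derived; as written, (1.1) and (1.2) imply the other five conditions but are not shown to follow from them. The fix is the same convolve-with-$T_0$ trick: if $f*T=0$ with $f\in\LL^\infty(\phaseS)$, then $f*(T*T_0)=0$ where $T*T_0$ is a regular $\LL^1$ function, and the annihilator form of Wiener's theorem (equivalently, your lemma rerun with $f\in\LL^\infty(\phaseS)$ in place of $A\in\lh$) gives $f=0$. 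Second, your existence argument for the regular vacuum divides the \emph{distributional} Weyl transform $\widehat A$ of a bounded operator by the Gaussian $\widehat{T_0}$; this presupposes a tempered-distribution extension of the Weyl transform, the product formula $\widehat{A*T_0}=\widehat A\,\widehat{T_0}$ in that setting, and its injectivity, none of which the paper's framework provides. You can avoid this entirely: $g_0=T_0*T_0$ is an $\LL^1$ function with $\widehat{g_0}=\widehat{T_0}^{\,2}$ nowhere zero, so $g_0$ is regular by classical Wiener, and then $A*T_0=0\Rightarrow A*g_0=0\Rightarrow A=0$ by your own lemma --- which is, modulo packaging, exactly the paper's existence proof.
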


\begin{proposition}\label{prop2}
Let $T\in \tc$, $1< p<\infty$, and set $q = (1-p^{-1})^{-1}$. Then the following conditions are equivalent.
\begin{itemize}
\item[(2.0)] $T$ is $p$-regular.
\item[(2.1)] If $f\in \LL^q(\phaseS)$ and $f*T=0$, then $f=0$.
\item[(2.2)] The set $\schtn p*T$ is dense in $\LL^p(\phaseS)$.
\item[(2.3)] If $A\in\schtn q$ and $A*T=0$, then $A=0$.
\item[(2.4)] The set $\LL^p(\phaseS)*T$ is dense in $\schtn p$.
\item[(2.5)] $T*T$ is $p$-regular.
\item[(2.6)] For some (resp.\ all) regular $T_0\in\tc$,  $T*T_0$ is $p$-regular.
\end{itemize}
\end{proposition}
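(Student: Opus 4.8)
The plan is to run everything through the $\Z_2$-graded convolution algebra of \eqref{convols} together with the reflexivity of $\schtn p$ and $\LL^p(\phaseS)$ for $1<p<\infty$. I would split the seven conditions into an \emph{operator side} (2.0), (2.3), (2.4) and a \emph{function side} (2.1), (2.2), prove each side internally by Hahn--Banach duality, bridge the two sides using associativity of the convolution and one auxiliary operator with nowhere-vanishing Weyl transform, and finally reduce (2.5), (2.6) to the function side.

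For the operator side, (2.0) $\Leftrightarrow$ (2.4) follows by identifying closed subspaces of $\schtn p$: for $f\in\LL^p(\phaseS)$ the Bochner integral $f*T=\int f(x)\,\alpha_x(T)\,dx$ is a $\schtn p$-limit of Riemann sums, hence lies in the closed span of $\{\alpha_x(T)\}$, while conversely $\alpha_x(T)=\lim_\varepsilon (\alpha_x\phi_\varepsilon)*T$ for a compactly supported approximate identity $\phi_\varepsilon\in\LL^p(\phaseS)$ lies in $\overline{\LL^p(\phaseS)*T}$; so the two closures coincide. For (2.0) $\Leftrightarrow$ (2.3) I use $(\schtn p)^\ast=\schtn q$: by Hahn--Banach, (2.0) fails iff some $0\neq A\in\schtn q$ annihilates every translate, i.e.\ $\tr{A\,\alpha_x(T)}=(A*T_-)(x)=0$ for all $x$; since $(A*B)_-=A_-*B_-$, replacing $A$ by $A_-$ turns this into $A*T=0$, the negation of (2.3). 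The function side (2.1) $\Leftrightarrow$ (2.2) is handled identically using $(\LL^p(\phaseS))^\ast=\LL^q(\phaseS)$ and the duality relation \eqref{duality}, which turns the annihilator condition $\int g\,(A*T)=0$ for all $A\in\schtn p$ into $g_-*T=0$, equivalently $g*T=0$.

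The bridge between the two sides is the key step. I fix a Gaussian $T_0\in\tc$, whose Weyl transform $\widehat{T_0}$ is nowhere zero; then $\widehat{T_0}\,\widehat{B}=0$ forces $\widehat{B}=0$, so convolution by $T_0$ is injective on both $\LL^q(\phaseS)$ and $\schtn q$. To prove (2.3) $\Rightarrow$ (2.1), take $f\in\LL^q(\phaseS)$ with $f*T=0$ and put $A=f*T_0\in\schtn q$; associativity and commutativity give $A*T=(f*T)*T_0=0$, so (2.3) yields $f*T_0=0$, whence $f=0$. The converse (2.1) $\Rightarrow$ (2.3) is the mirror image, convolving $A\in\schtn q$ with $T_0$ to land in $\LL^q(\phaseS)$. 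This closes the cycle over (2.0)--(2.4). For (2.5) and (2.6) I note that $T*T$ and $T*T_0$ lie in $\LL^1(\phaseS)\subset\LL^p(\phaseS)$, so by the function analogue of the duality above their $p$-regularity means $g*(T*T)=0\Rightarrow g=0$, resp.\ $g*(T*T_0)=0\Rightarrow g=0$, for $g\in\LL^q(\phaseS)$. Writing $g*(T*T)=(g*T)*T$ and $g*(T*T_0)=(g*T)*T_0$ and applying (2.3) for $T$, resp.\ condition (1.3) of a general regular $T_0\in\lh$, these reduce to $g*T=0\Rightarrow g=0$, i.e.\ (2.1); both the ``some'' and ``all'' forms of (2.6) fall out, the forward implication needing nothing about $T_0$ and the converse only (1.3).

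The step I expect to be the main obstacle is the bridge, specifically the rigorous use of associativity and commutativity for the \emph{mixed} products $(f*T_0)*T=f*(T_0*T)$ and $(A*T_0)*T=(A*T)*T_0$ when a factor lies in $\LL^q(\phaseS)$ or $\schtn q$ rather than in $\LL^1(\phaseS)\oplus\tc$. These products exist only through the extended weak$^\ast$/duality prescription following \eqref{duality}, so I must check that the Banach-algebra identities persist under the extension --- most cleanly by approximating $f$ (resp.\ $A$) within $\LL^q(\phaseS)\cap\LL^1(\phaseS)$ (resp.\ $\schtn q\cap\tc$) and passing to the limit via the norm estimates \eqref{inequalities1}. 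The rest is routine, apart from careful bookkeeping of the parity operation $(\cdot)_-$ in the duality computations.
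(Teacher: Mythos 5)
Your overall architecture (duality within each side, associativity bridges between the operator and function sides, reduction of (2.5)/(2.6) to (2.1)) parallels the paper's, and several pieces --- (2.0)$\Leftrightarrow$(2.3), (2.1)$\Leftrightarrow$(2.2), and the treatment of (2.5), (2.6) --- are sound. But the bridge, which you yourself flag as the crux, has a genuine gap, and it is not the one you anticipate. The problem is not associativity of the extended convolution (that can indeed be handled by approximation and the estimates \eqref{inequalities1}); it is your claim that ``$\widehat{T_0}\,\widehat{B}=0$ forces $\widehat{B}=0$, so convolution by $T_0$ is injective on both $\LL^q(\phaseS)$ and $\schtn q$.'' For $1<p<2$, i.e.\ $q>2$, the objects whose Weyl transforms you invoke ($B\in\schtn q$, and likewise $f*T_0\in\schtn q$) have no Weyl transform in the paper's framework: $\widehat{\,\cdot\,}$ is defined on $\tc$ and extends only to $\hs$ by Plancherel. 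There is no quantum analogue of tempered distributions available here, so the product formula and the division-by-a-nonvanishing-Gaussian argument are not even well posed. Worse, the claim is essentially circular: by the very $\duality{\schtn p}{\schtn q}$ duality you use elsewhere, injectivity of $A\mapsto A*T_0$ on $\schtn q$ is \emph{equivalent} to $p$-regularity of $(T_0)_-$ --- an instance of the theorem being proved. Nonvanishing of $\widehat{T_0}$ alone does not yield it without a quantum version of Wiener's theorem (or an independent argument, e.g.\ analyticity of coherent-state matrix elements, which you do not give). Your proposed fix --- approximating within $\schtn q\cap\tc$ --- does not help: from $A_n\to A$ in $\schtn q$ and $A_n*T_0\to 0$ you cannot conclude $A=0$, since the individual $A_n*T_0$ need not vanish.

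The paper circumvents exactly this point in two ways. First, its analogue of your direction (2.1)$\Rightarrow$(2.3) needs no distinguished $T_0$ at all: from $A*T=0$ one convolves with an \emph{arbitrary} $S\in\tc$, uses (2.1) to get $S*A=0$, and evaluates at the origin, $(S*A)(0)=\tr{SA_-}=0$ for all $S$, whence $A=0$. Second, for (2.3)$\Rightarrow$(2.1) it invokes the classical Wiener approximation theorem only where it is legitimate --- to produce a $T_0$ with $\tc*T_0$ dense in $\LL^1(\phaseS)$ (condition (1.2)) --- and then concludes $f=0$ from $g*f=0$ for an $\LL^1$-dense set of $g$; the nonvanishing of the Gaussian's Fourier transform enters only through the classical $\LL^1$ theorem, never through an ill-defined transform on $\schtn q$. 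You should restructure your bridge along these lines. (Two smaller points: your inclusion ``$\LL^1(\phaseS)\subset\LL^p(\phaseS)$'' is false on $\Rl^2$ --- what is true and suffices is $T*T,\,T*T_0\in\LL^1(\phaseS)\cap\LL^\infty(\phaseS)\subset\LL^p(\phaseS)$; and your appeal to (1.3) for a regular $T_0$ in the (2.6) step is fine, but note it quietly imports Proposition~\ref{prop1}, which the paper proves jointly with this one.)
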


\begin{proposition}\label{prop3}
Let $T\in \tc$. Then the following conditions are equivalent.
\begin{itemize}
\item[(3.0)] $T$ is $\infty$-regular.
\item[(3.1)] If $f\in \LL^1(\phaseS)$ and $f*T=0$, then $f=0$.
\item[(3.2)] The set $\lh*T$ is weak*-dense in $\LL^\infty(\phaseS)$.
\item[(3.3)] If $A\in\tc$ and $A*T=0$, then $A=0$.
\item[(3.4)] The set $\LL^\infty (\phaseS)*T$ is weak*-dense in $\lh$.
\item[(3.5)] $T*T$ is $\infty$-regular.
\item[(3.6)] For some (resp.\ all) regular $T_0\in\tc$,  $T*T_0$ is $\infty$-regular.
\item[(3.7)] The set $\kh*T$ is dense in $\Cn(\phaseS)$.
\item[(3.8)] The set $\Cn(\phaseS)*T$ is dense in $\kh$.
\end{itemize}
\end{proposition}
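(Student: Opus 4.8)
\emph{Proof proposal.}
The plan is to establish the equivalences (3.0)--(3.4) first and then to deduce the transfer statements (3.5),(3.6) and the predual versions (3.7),(3.8), running the same template as for Propositions~\ref{prop1} and~\ref{prop2}. The seven conditions split into an operator group $\{(3.0),(3.3),(3.4)\}$, living in $\lh$ and its predual $\tc$, and a function group $\{(3.1),(3.2)\}$, living in $\LL^\infty(\phaseS)$ and its predual $\LL^1(\phaseS)$; the whole proposition then hinges on a single bridge between the two groups.

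First I would dispatch the purely dual equivalences by Hahn--Banach. Because $\lh=(\tc)^*$ and $\LL^\infty(\phaseS)=(\LL^1(\phaseS))^*$, weak*-density of a subspace is the same as triviality of its pre-annihilator. Rewriting the pairings through the duality identity \eqref{duality}, the pre-annihilator of $\LL^\infty(\phaseS)*T$ in $\tc$ is exactly $\{A\in\tc\mid A*T=0\}$ and that of $\lh*T$ in $\LL^1(\phaseS)$ is $\{f\in\LL^1(\phaseS)\mid f*T=0\}$, since $A*T\in\LL^1(\phaseS)$ and $f*T\in\tc$ by \eqref{correspond} and the trace pairing is nondegenerate; this gives (3.4)$\Leftrightarrow$(3.3) and (3.2)$\Leftrightarrow$(3.1). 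The equivalence (3.0)$\Leftrightarrow$(3.4) is the span-versus-convolution argument: each $f*T=\int f(x)\,\alpha_x(T)\,dx$ from \eqref{convols} lies in the weak*-closed span of the translates, while convolving $T$ with an approximate identity and invoking strong continuity of $x\mapsto\alpha_x$ on $\tc$ recovers every $\alpha_x(T)$ as a norm limit of elements of $\LL^1(\phaseS)*T$. These steps are routine.

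The heart of the matter is the bridge (3.1)$\Leftrightarrow$(3.3), and here I would exploit the $\Z_2$-graded commutative algebra structure together with the irreducibility of the Weyl system. Arguing contrapositively for (3.3)$\Rightarrow$(3.1), from a nonzero $f\in\LL^1(\phaseS)$ with $f*T=0$ I would form the operators $A=f*S\in\tc$ for $S\in\tc$ (using $\LL^1(\phaseS)*\tc\subset\tc$); associativity and commutativity give $A*T=(f*T)*S=0$, and $A\neq0$ for some $S$ because $f*S=0$ for all $S$ would force $f=0$, the matrix coefficients $x\mapsto\overline{\langle\xi|W(x)\phi\rangle}\,\langle\eta|W(x)\psi\rangle$ spanning a dense subspace of $\LL^2(\phaseS)$. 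The reverse implication is symmetric, producing a nonzero function $f=A*B\in\LL^1(\phaseS)$ from a nonzero operator annihilator $A$ via $\tc*\tc\subset\LL^1(\phaseS)$ and the nondegeneracy expressed by \eqref{integral}. I expect this to be the main obstacle: it is the only point where one genuinely crosses between the function and operator parts of the algebra, and nondegeneracy on both sides rests on the square-integrability of the representation.

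The remaining statements follow algebraically from the equivalences in hand. For (3.5), set $A=f*T\in\tc$; then $f*(T*T)=A*T$, so if $T$ is $\infty$-regular then $f*(T*T)=0$ forces $A=0$ by (3.3) and hence $f=0$ by (3.1), giving $\infty$-regularity of $T*T$ through the classical (function) analogue of (3.0)$\Leftrightarrow$(3.1); conversely $f*T=0$ implies $f*(T*T)=0$, so the other direction is immediate. For (3.6) and a regular $T_0$ one uses $f*(T*T_0)=(f*T)*T_0$ and cancels $T_0$ by condition (1.3) of Prop.~\ref{prop1} applied to $f*T\in\tc\subset\lh$, while Prop.~\ref{prop1} also supplies the regular operators the statement quantifies over. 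Finally (3.7) and (3.8) are the predual versions: the norm-density of the translates in $\kh$ and their weak*-density in $\lh$ share the single annihilator $\{A\in\tc\mid \tr{A\,\alpha_x(T)}=0\ \forall x\}$ (with Goldstine's theorem supplying the one nontrivial inclusion), (3.8)$\Leftrightarrow$(3.3) is Hahn--Banach in $\kh^*=\tc$ using $A*T\in\LL^1(\phaseS)$ and $\Cn(\phaseS)\leftrightarrow\kh$ from \eqref{correspond}, and (3.7) is handled like (3.1) after a mollification argument reduces the measure-valued annihilator in $(\Cn(\phaseS))^*$ to the $\LL^1(\phaseS)$ case.
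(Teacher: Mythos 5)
Your proposal is essentially correct, and its skeleton --- Hahn--Banach/annihilator arguments for the dual pairs, an associativity-based bridge between the function conditions and the operator conditions, and reduction of (3.5)--(3.8) to that bridge --- is the same duality strategy the paper uses. You diverge from the paper in two places, both defensible. First, for the bridge (3.3)$\Rightarrow$(3.1) the paper does not invoke a direct nondegeneracy claim: it picks a regular $T_0$ (a Gaussian projection, shown regular via Wiener's approximation theorem in the course of proving Prop.~\ref{prop1}), observes that $S*f=0$ for all $S\in\tc$ gives $g*f=0$ for the $\LL^1$-dense family $g=T_0*S$, and concludes $f=0$. Your route instead closes the argument with the fact that $f*S=0$ for all $S\in\tc$ forces $f=0$; for $f\in\LL^1(\phaseS)$ this is immediate from Weyl transforms with a Gaussian $S$, so you avoid Wiener's theorem at this point (you still need Prop.~\ref{prop1} for (3.6) and for the existence of regular operators). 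Second, for (3.7)/(3.8) the paper never passes through $(\Cn(\phaseS))^*$: it first shows that a regular $T_0$ satisfies both (3.7) and (3.8) by making $T_0*T_1$ an approximate identity, and then transfers these properties to a general $T$ through the factorization $A*T=(T*f)*T_0$. Your annihilator-plus-mollification argument in the space of finite measures is a valid alternative and arguably more uniform with the rest of your proof.

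Two of your justifications are imprecise, though both gaps close with facts already in the paper. (i) To produce a nonzero $f=A*B$ from a nonzero $A\in\tc$ you cite \eqref{integral}; that fails whenever $\tr{A}=0$. The correct nondegeneracy statement is $(A*B)(0)=\tr{AB_-}$ from \eqref{convols}: take $B=(A^*)_-$ to get $(A*B)(0)=\Vert A\Vert_2^2>0$. This is exactly the evaluation-at-zero step the paper uses in (a.1)$\Rightarrow$(a.3). (ii) Your claim that $f*S=0$ for all $S$ forces $f=0$ ``because products of matrix coefficients span a dense subspace of $\LL^2(\phaseS)$'' does not suffice as stated: $f$ is only in $\LL^1(\phaseS)$, and pairing an $\LL^1$ function against an $\LL^2$-dense family of test functions does not let you conclude $f=0$, since one cannot pass to $\LL^2$ limits inside $\int fg\,dx$. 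You need density in a topology compatible with the $\LL^1$ pairing --- for instance sup-norm density in $\Cn(\phaseS)$ --- or, more simply, the Gaussian argument: $\widehat{f*S}=\widehat f\,\widehat S=0$ with $\widehat S$ nowhere vanishing gives $\widehat f\equiv 0$, hence $f=0$. With these two repairs your proposal goes through.
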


\begin{proof}\proofsec{(a.1)$\Leftrightarrow$(a.2) and (a.3)$\Leftrightarrow$(a.4) for a=1,2,3, and (3.3)$\Leftrightarrow$(3.8)}
are all based on the same basic fact concerning continuous linear operators between dual pairings of topological vector spaces \cite[Ch.~IV.2.3]{Schaefer}, i.e., in the most general setting in which the notion of adjoint makes sense: a continuous linear operator is injective if and only if its adjoint has dense range in the weak topology induced by the pairing. Indeed, the vectors in the kernel of the operator are precisely those vanishing on the range of the adjoint. In the cases at hand we have the canonical dual pairings of the Banach spaces $\duality{\LL^1(\phaseS)}{\LL^\infty(\phaseS)}$, $\duality{\LL^p(\phaseS)}{\LL^q(\phaseS)}$, $\duality\tc\lh$ and $\duality{\schtn p}{\schtn q}$. The operator involved is always written as $X\mapsto X*T$ whose adjoint, taken from \eqref{duality}, is $Y\mapsto Y*T_-$. We have omitted the minus subscripts from the statements of the theorem, because all conditions of the propositions are obviously equivalent for $T$ and for $T_-$.
We note that in the general result the natural topology in which the range is taken to be dense is the weak one induced by the pairing. However, since in a Banach space the weak closure is equal to the norm closure, the density in $\tc,\schtn p,\LL^1 (X),\LL^p(X)$ is also in norm as stated. In contrast, in Prop.~\ref{prop3}, it is the weak* topology of $\lh, \LL^\infty(X)$, i.e., the weak topology coming from the predual. The statement in the norm topology would be false. Indeed, (3.2) is always false with norm density, because all functions $A*T$, and hence their norm limits are uniformly continuous on phase space.

For proving (3.3)$\Leftrightarrow$(3.8) we take the dualities $\duality\tc\kh$ and $\duality{\LL^1(\phaseS)}{\Cn(\phaseS)}$. The latter is now not a pair of a Banach space and its dual, but still satisfies the mutual separation conditions for a duality. The operators $T*:\tc\to\LL^1(\phaseS)$, and $T_-*:\Cn(\phaseS)\to\kh$ are adjoints with respect to these pairings, therefore $T*$ is injective ($\Leftrightarrow$ (3.3)) iff the range of $T_-*$ is dense in the weak topology of $\kh$, and hence in the norm topology, which is (3.8).

\proofsec{(a.0)$\Leftrightarrow$(a.3) for a=1,2,3} Since $A*T(x) = \tr{A_-\alpha_{-x}(T)}$, also this follows immediately from the dualities $\duality\tc\lh$ and $\duality{\schtn p}{\schtn q}$.

\proofsec{(a.1)$\Rightarrow$(a.3) for a=1,2,3} follows from the associativity of convolution.
Assume (a.1), i.e. injectivity on the appropriate function class $\LL^1(X),\LL^q(X),\LL^\infty(X)$, and assume $A*T=0$ for some operator in the corresponding operator space $\tc,\schtn q,\lh$. Then, for all trace class operators $S$, we have $S*A*T=0$. But $S*A$ is in the appropriate function class, so with (a.1) we get $S*A=0$. In particular, $S*A(0)=\tr{SA_-}=0$. Since $S$ was arbitrary, $A_-=A=0$.

\proofsec{(a.5)$\Rightarrow$(a.2) and (a.6)$\Rightarrow$(a.2) for a=1,2,3} Given any $T'\in\tc$ we have $\alpha_x(T'*T) = \alpha_x(T')*T$ for all $x\in \phaseS$, so that $\{\alpha_x(T*T')\mid x\in \phaseS\}\subset \tc*T\subset \schtn p *T\subset \lh*T$ for all $1\leq p<\infty$.

\proofsec{Existence of a $T_0$ satisfying (1.2)} Let $T_0$ be the one dimensional projection onto any Gaussian wave function. Indeed, then $\widehat{T_0*T_0}(x) = \widehat{T}(x)^2$ is of (complex) Gaussian form which thus never vanishes, so we can apply the classic Wiener's approximation theorem (discussed more in the next section) to conclude that the translates of $T_0*T_0$ span $\LL^1(\phaseS)$. Since we have already proved that (1.5)$\Rightarrow$(1.2), $T_0$ satisfies (1.2).

\proofsec{(a.3)$\Rightarrow$(a.1) for a=1,2,3} is analogous to (a.1)$\Rightarrow$(a.3), with one additional idea. Assume (a.3) and $f*T=0$, with $f$ in $\LL^1(X),\LL^q(X),\LL^\infty(X)$. Then as before we get $S*f=0$ for {\it all} $S\in \tc$. To conclude the proof we choose some $T_0$ satisfying (1.2). Since $T_0*S*f=0$, we have that $g*f=0$ for the $\LL^1$-dense set of functions $g=T_0*S$. This implies $f=0$.

\proofsec{(a.2)$\Rightarrow$(a.5) and (a.2)$\Rightarrow$(a.6) for a=1,2,3}
Assuming (a.2), we can approximate any $f\in \LL^p(\phaseS)$ by $A*T$ with some $A\in \schtn p$ ($A\in \lh$ in case $p=\infty$). On the other hand, we have already proved that (a.2)$\Rightarrow$(a.1)$\Rightarrow$(a.3)$\Rightarrow$(a.0), so $T$ is $p$-regular. Hence, we can further approximate $A$ by a linear combination $\sum_{j=1}^n c_j \alpha_{x_j}(T)$; then $\sum_{j=1}^n c_j \alpha_{x_j}(T*T)$ approximates $f$ because of the $p$-norm (weak* in case $p=\infty$) continuity of $A\mapsto A*T$. This proves (a.5). If $T_0\in \tc$ is regular, it is also $p$-regular for all $1\leq p\leq \infty$, so we can also approximate $A$ by a linear combination of translates of $T_0$ instead of those of $T$; this proves (a.6).

\proofsec{(3.7)$\Leftrightarrow$(3.8)} Now note that both statements (3.7) and (3.8) hold for a regular $T_0$. Indeed, by (1.2) we can find $T_1$ so that $T_0*T_1$ is close in 1-norm to a normalized density concentrated in a small ball around the origin. Hence $\Vert{f-T_0*T_1*f}\Vert_\infty$ can be made arbitrarily small for any uniformly continuous $f\in\LL^\infty(\phaseS)$, and in particular for $f\in\Cn(\phaseS)$. Since $T_1*f\in\kh$, we conclude that $T_0*\kh\subset\Cn(\phaseS)$ is dense. Similarly, $T_0*\Cn(\phaseS)\subset\kh$ is dense.

Now assume (3.8), for some $T$. Then the set of all $A*T$ contains those with $A=f*T_0$, $f\in \Cn(\phaseS)$. But then in $A*T=(T*f)*T_0$ the first factor ranges over a dense subset of $\kh$, which by the density property of $T_0$ implies (3.7). Again the converse is completely analogous.

\end{proof}

\section{Regularity and zero sets}\label{sec:zero}

The following Proposition establishes the announced equivalence between $1,2,\infty$-regularity of a trace class operator $T$ and the ``spectral'' conditions \zc1-\zc3.  Of course, \zc1$\Rightarrow$\zc2$\Rightarrow$\zc3. These inclusions will be shown to be strict in the following section.

\begin{proposition}\label{propZ} Let $T\in \tc$. Then
\begin{itemize}
\item[(1)] $T$ is regular iff $Z(T)$ is empty \zc1.
\item[(2)] $T$ is $2$-regular iff $Z(T)$ is of measure zero \zc2.
\item[(3)] $T$ is $\infty$-regular iff $Z(T)$ has dense complement \zc3.
\end{itemize}
\end{proposition}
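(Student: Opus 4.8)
The plan is to translate each regularity notion for the operator $T$ into the corresponding regularity notion for the Weyl transform $\widehat T$ viewed as a function on phase space, and then to invoke the classical harmonic-analysis characterization of when the translates of a function span the relevant $\LL^p$-space in terms of the zero set of its Fourier transform. The bridge is the Plancherel-type statement recalled at the end of Section~2: the Weyl transform $S\mapsto\widehat S$ is a unitary $\hs\to\LL^2(\phaseS)$, and more generally the Fourier transform intertwines the operator convolution with pointwise multiplication, $\widehat{A*T}=\widehat A\,\widehat T$. Under this dictionary the span of $\{\alpha_x(T)\}$ corresponds to the span of translates of $\widehat T$, because $\widehat{\alpha_x(T)}(y)=e^{-i\{x,y\}}\widehat T(y)$, i.e.\ translation of $T$ becomes modulation of $\widehat T$; by the symmetry of the Fourier transform, modulation of $\widehat T$ corresponds to translation of the function $T$ lives over. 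So the correct statement is that $p$-regularity of $T$ is equivalent to $p$-regularity of the \emph{function} $\widehat T$ in the classical sense, and $Z(T)$ is precisely the zero set of $\widehat T$.

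First I would make this correspondence precise: I would state and use the abstract fact (which the paper announces it will establish, presumably in the final section promising a complete map to classical harmonic analysis) that for $T\in\tc$ the conditions ``$T$ is $p$-regular'' and ``$\widehat T$ is $p$-regular as an $\LL^{p'}$-function'' coincide, with the correct pairing of indices. Given this, the three parts reduce to three classical statements about a continuous integrable function $g=\widehat T$ on $\R^2$: (1) the translates of $g$ span $\LL^1$ iff $g$ has no zeros (Wiener's Tauberian theorem, already invoked in the proof of Prop.~1 for the Gaussian case); (2) the translates span $\LL^2$ iff the Fourier transform of $g$ vanishes only on a set of measure zero (the $\LL^2$ Wiener theorem, a direct Plancherel argument); and (3) the translates span weak*-densely in $\LL^\infty$ (equivalently norm-densely in $\Cn$) iff the zero set has empty interior. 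For part (2) I would argue directly via Plancherel rather than through the dictionary: $2$-regularity of $T$ means $\{\alpha_x(T)\}$ spans $\hs$ densely, which under the unitary $\widehat{\phantom{T}}$ is density of the modulates $\{e^{-i\{x,\cdot\}}\widehat T\}$ in $\LL^2(\phaseS)$; a vector orthogonal to all of them is a function $h$ with $\widehat{h\,\overline{\widehat T}}=0$, i.e.\ $h\,\overline{\widehat T}=0$ a.e., and such a nonzero $h$ exists precisely when $\{\widehat T=0\}$ has positive measure.

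The cleanest route to the endpoint cases is to lean on the equivalences already proved in Propositions~\ref{prop1} and~\ref{prop3}. For (1), regularity is equivalent to condition (1.1): $f*T=0$ with $f\in\LL^\infty$ forces $f=0$. Taking Fourier transforms, $f*T=0$ reads $\widehat f\,\widehat T=0$ as a product of a (tempered) distribution with the continuous function $\widehat T$; if $Z(T)=\emptyset$ then $\widehat T$ is nonvanishing and one divides to get $\widehat f=0$, hence $f=0$, giving the ``if'' direction. Conversely, if $\widehat T(x_0)=0$ for some $x_0$, I would construct a nonzero $f\in\LL^\infty$ with $\widehat f$ supported near $x_0$ where $\widehat T$ is small, contradicting injectivity—this is exactly the failure-of-Wiener mechanism and is where the classical Tauberian theorem does the real work. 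For (3), I would use condition (3.1): $\infty$-regularity is equivalent to injectivity of $f\mapsto f*T$ on $\LL^1$. If the complement of $Z(T)$ is dense (so $Z(T)$ has empty interior) and $\widehat f\,\widehat T=0$ with $\widehat f\in\Cn$ continuous, then $\widehat f$ vanishes on the dense open complement of $Z(T)$, hence everywhere by continuity, so $f=0$; conversely if $Z(T)$ contains an open ball, any $f\ne 0$ with $\widehat f$ supported in that ball satisfies $f*T=0$, defeating injectivity.

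The main obstacle, and the step deserving the most care, is part~(1): establishing that no zeros of $\widehat T$ implies regularity is genuinely the content of Wiener's approximation theorem and cannot be obtained by the soft duality arguments that power the rest of Proposition~\ref{prop1}; the paper has in fact already quarantined this difficulty by proving the \emph{existence} of a regular $T_0$ (the doubled Gaussian) via Wiener, and I expect the honest proof here to reduce the general nonvanishing case to that theorem through the Fourier dictionary. The $\LL^2$ case~(2) is by contrast essentially free once Plancherel is in hand, and the $\infty$ case~(3) is a short continuity argument; so the burden of the proof rests on correctly importing the classical Tauberian input and on verifying that the quantum-classical correspondence of $p$-regularity holds with the index pairing stated above.
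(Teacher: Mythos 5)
Your parts (2) and (3) are correct and essentially coincide with the paper's own proof: the Plancherel/orthogonality argument for (2) and the continuity-plus-bump-function argument for (3), routed through the equivalences of Propositions \ref{prop2} and \ref{prop3}, are exactly what the paper does. The genuine gap is in part (1), in the direction \zc1 $\Rightarrow$ regularity. You rightly concede that the distributional division argument is invalid (for $f\in\LL^\infty(\phaseS)$ the transform $\widehat f$ is only a tempered distribution, $\widehat T$ is merely continuous, so the product $\widehat f\,\widehat T$ is not even defined, let alone divisible; this failure is the whole content of Wiener's theorem). But the fallback you invoke --- the ``abstract fact'' that $T$ is $p$-regular iff $\widehat T$ is $p$-regular as a function --- is neither what the paper establishes nor correct as stated. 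Translates of $T$ correspond to \emph{modulates} of $\widehat T$; undoing the modulation by a second (symplectic) Fourier transform lands you not on $\widehat T$ but on its inverse transform, i.e.\ the Wigner function of $T$, and for a general $T\in\tc$ this function need \emph{not} lie in $\LL^1(\phaseS)$. So Wiener's approximation theorem cannot be applied to it, and even when it is integrable, density of its translates in $\LL^1(\phaseS)$ does not transfer back to density of the $\alpha_x(T)$ in $\tc$ by any soft argument, since the Weyl--Wigner correspondence is not an isomorphism of $\tc$ onto $\LL^1(\phaseS)$.

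The missing idea is the one the paper's proof turns on: apply Wiener not to $\widehat T$ (nor to the Wigner function) but to the operator--operator convolution $T*T$. Quantum harmonic analysis guarantees that $T*T$ is always a genuine $\LL^1(\phaseS)$ function, with $\widehat{T*T}=\widehat{T}^{\,2}$, hence $Z(T*T)=Z(T)=\emptyset$ under \zc1; Wiener's theorem then says $T*T$ is regular as a function, which is precisely condition (1.5), and the already-proved soft equivalence (1.5)$\Leftrightarrow$(1.0) of Proposition \ref{prop1} finishes the proof. (You notice this mechanism for the Gaussian $T_0$ but do not realize it is available for \emph{every} $T$; this convolution dictionary, $T\mapsto T*T_0$, is also the correspondence the paper actually formalizes later in Proposition \ref{propC}.) A smaller repair is needed in your converse direction of (1): a nonzero $f\in\LL^\infty(\phaseS)$ with $\widehat f$ ``supported near $x_0$ where $\widehat T$ is small'' does not contradict injectivity, because that requires $f*T=0$ \emph{exactly}. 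The correct choice concentrates $\widehat f$ precisely on the zero: take the character $f(x)=e^{i\{x_0,x\}}$ (equivalently, the paper's choice $A=W(x_0)$ in condition (1.3)), for which $f*T$ is a fixed Weyl operator multiplied by the value of $\widehat T$ at the zero, hence exactly $0$; alternatively, simply note that every element of the closed linear span of $\{\alpha_x(T)\}$ has Weyl transform vanishing at $x_0$, so the span cannot be dense in $\tc$.
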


Since $\infty$-regularity is equivalent to informational completeness, Prop.~\ref{propZ} (3) gives the desired spectral characterization of this property. We note that in Props.~1--4, the positivity of $T$ is not required, so they are a bit more general than needed for the discussion of covariant observables. Of course, when we show later that \zc2 is not necessary, we have to be careful to construct a counterexample of a positive $T$, since the reverse implication might be true just under this additional assumption.

\begin{proof}
\proofsec{\zc1$\implies$(1.5)} Here we just refer to Wiener's approximation theorem \cite{Wiener}.

\proofsec{(1.3)$\implies$\zc1} Let $A=W(x)$ be a Weyl operator. Then $A*T$ is equal to the Weyl transform multiplied by an exponential. Hence if the Weyl transform of $T$ had a zero at $x$, we would conclude that $W(x)*T=0$ and hence, by (1.3), $W(x)=0$, which is a contradiction.

\proofsec{\zc2$\Leftrightarrow$(2.1) for $p=2$}
Assume that $Z(T)$ has measure zero, and $T*f=0$. This convolution is, in general, defined by continuous extension from $\LL^1$-functions $f$ with respect to the 2-norms. Since the Weyl transform is isometric by the quantum version of the Plancherel Theorem, this means that $\widehat T(x)\widehat f(x)=0$ for almost all $x$. But since  $\widehat T(x)\neq0$ almost everywhere, $\widehat f(x)=0$ almost everywhere, which by definition of  $\LL^2(\phaseS)$ means that $f=0$.

Conversely, assume that $Z(T)$ has positive measure. Then we can find a bounded subset $Y$, which still has positive measure. Then let $\widehat f$ be the indicator function of $Y$. This is non-zero and in $\LL^2(X)$, and hence so is $f$. On the other hand, by construction, $\widehat T(x)\widehat f(x)=0$ for all $x$, and hence by Fourier transform $T*f=0$. Hence (2.1) fails, too.

\proofsec{\zc3$\Leftrightarrow$(3.3) (Characterization of informational completeness)}
We have already noted the implication \zc3$\implies$(3.3) in the introduction: When $A*T=0$, we have $\widehat A(x)\widehat T(x)=0$ for all $x$. By assumption $\widehat T(x)\neq0$ on a dense set on which, consequently, $\widehat A(x)=0$. Since $\widehat A$ is a continuous function, it vanishes identically.

Now suppose that \zc3 does not hold, i.e. there exists an open set $\Omega\subset Z(T)$. We will construct a non-zero function $f\in\LL^1(\phaseS)$ such that $\widehat f$ has support in $\Omega$. (This will immediately be a counterexample to (3.1).) Since any regular $T_0$ satisfies \zc1 according to what we just proved above, a counterexample to (3.3) can be obtained as $A=f*T_0$.

It remains to construct a non-zero $f\in\LL^1(\phaseS)$ with ${\mathrm{supp}}\widehat f\subset\Omega$. For this we can take any sufficiently smooth function  $\widehat f$ with the required support and define $f$ by inverse Fourier transform. The Fourier transform will thus decrease faster than any desired power, and will therefore be integrable.
\end{proof}

Wiener's approximation theorem was used in a crucial way in this proof. In fact, we could have obtained the whole proposition as a corollary of classical results of classical harmonic analysis. Let us make these connections more explicit, since they will also be crucial for understanding the more subtle cases of $p$-regularity with $p\neq1,2,\infty$. Regularity statements of operators can be reduced to those of functions via the equivalences (a.0)$\Leftrightarrow$(a.5) for $a=1,2,3$ above. Similarly, the properties of zero sets are translated via the relation $Z(T)= Z(T*T)$.  Here, analogously to the introduction, we define the set $Z(f)$ for $f\in\LL^1(\phaseS)$ as the zero set of its Fourier transform $\widehat f$. With this translation, Wiener's approximation theorem \cite{Wiener} becomes Prop.~\ref{propZ}(1). For (2) we can invoke another result from \cite{Wiener}, namely that $2$-regularity is equivalent to $Z(f)$ having zero measure.  Finally, the characterization of $\infty$-regularity of functions by $Z(f)$ having dense complement is e.g. in \cite[Thm.\ (2.3)]{Edwards}.

In his classic paper Wiener already raised the question \cite[p.~93]{Wiener} about other values of $p$. This has turned out to be a subtle problem, generating a rich literature (see e.g. \cite{Segal,Beurling,Herz1957,Edwards,LevOlev}). The point we wish to make here is that the results obtained in this context can be turned directly into statements about operators using the translation principles sketched above. We begin with a statement that makes this relation more symmetric: results about operator regularity also imply classical results.

Various notions of ``smallness'' for zero sets have been considered in the literature. In the following Proposition, we introduce another one, which we call a {\it $p$-slim} set for the sake of discussion. The terminology echoes the stronger notion of $p$-thin sets of Edwards \cite{Edwards}, and a still stronger condition, sets of ``type $U^{p/(1-p)}$'' in \cite{Herz1957} (see \cite{Edwards}, particularly Thm.~2.2 for these comparisons). 

\begin{proposition}\label{propC}
For $f\in\LL^p(\phaseS)\cap \LL^1(\phaseS)$, $T\in\tc$, a regular $T_0\in\tc$, and $1\leq p\leq\infty$ we have that
$f$ is $p$-regular iff $f*T_0$ is $p$-regular, and $T$ is $p$-regular iff $T_0*T$ is $p$-regular. \\
For a subset $S\subset\Rl^2$ the following conditions are equivalent:
\begin{itemize}
\item[(1)] For any $f\in\LL^1(\phaseS)\cap\LL^p(\phaseS)$, $Z(f)\subset S$ implies that $f$ is $p$-regular.
\item[(2)] For any $T\in\tc$, $Z(T)\subset S$ implies that $T$ is $p$-regular.
\end{itemize}
We call such sets {\bf $p$-slim}.
\end{proposition}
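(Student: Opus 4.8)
The plan is to establish the two transfer equivalences of the first sentence first, and then obtain the equivalence of (1) and (2) as a short formal consequence. Throughout I write $q=(1-p^{-1})^{-1}$ and use that, by Prop.~\ref{propZ}(1), regularity of $T_0$ means that $\widehat{T_0}$ has no zeros; since $\widehat{f*T_0}=\widehat f\,\widehat{T_0}$ and $\widehat{T*T_0}=\widehat T\,\widehat{T_0}$, this yields the identities $Z(f*T_0)=Z(f)$ and $Z(T*T_0)=Z(T)$, which drive the reduction. The operator transfer, ``$T$ is $p$-regular iff $T_0*T$ is $p$-regular'', is nothing but the equivalence (a.0)$\Leftrightarrow$(a.6) of Props.~\ref{prop1}--\ref{prop3} (read with the ``for all'' option in one direction and ``for some'' in the other), once one notes that $T_0*T=T*T_0$ by commutativity of the convolution and that $T*T_0$ is a \emph{function}, so its $p$-regularity is meant in the function sense.

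The substance is the function transfer, ``$f$ is $p$-regular iff $f*T_0$ is $p$-regular'', and I would prove both directions through the injectivity characterizations together with associativity of the convolution. First note that $f*T_0\in\tc$ (as $f\in\LL^1(\phaseS)$ and $T_0\in\tc$), so Props.~\ref{prop1}--\ref{prop3} apply to it; in particular its $p$-regularity is equivalent to condition (a.1), namely that $h\in\LL^q(\phaseS)$ with $h*(f*T_0)=0$ forces $h=0$. On the function side, the same dual-pairing principle (\cite[Ch.~IV.2.3]{Schaefer}) already invoked for the propositions, now applied to $\duality{\LL^p(\phaseS)}{\LL^q(\phaseS)}$ (or to its predual form when $p=\infty$), shows that $f$ is function $p$-regular iff $g\in\LL^q(\phaseS)$ with $g*f=0$ forces $g=0$. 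Granting these: if $f$ is $p$-regular and $h*(f*T_0)=0$ with $h\in\LL^q(\phaseS)$, then $(h*f)*T_0=0$ with $h*f\in\LL^q(\phaseS)$ by Young's inequality, so regularity of $T_0$ and its condition (a.1) give $h*f=0$, whence $h=0$ by $p$-regularity of $f$; thus $f*T_0$ is $p$-regular. Conversely, if $f*T_0$ is $p$-regular and $g*f=0$ with $g\in\LL^q(\phaseS)$, then $g*(f*T_0)=(g*f)*T_0=0$, so condition (a.1) for $f*T_0$ yields $g=0$, i.e.\ $f$ is $p$-regular.

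With both transfers available, the equivalence (1)$\Leftrightarrow$(2) is purely formal. For (2)$\Rightarrow$(1), given $f\in\LL^1(\phaseS)\cap\LL^p(\phaseS)$ with $Z(f)\subset S$, I set $T=f*T_0\in\tc$; then $Z(T)=Z(f)\subset S$, so (2) makes $f*T_0$ operator $p$-regular, and the function transfer returns $f$ $p$-regular. For (1)$\Rightarrow$(2), given $T\in\tc$ with $Z(T)\subset S$, I set $g=T*T_0$; the correspondence estimates $\Vert T*T_0\Vert_r\leq\Vert T\Vert_r\Vert T_0\Vert_1$ from \eqref{inequalities1} (at $r=1$ and $r=p$) place $g$ in $\LL^1(\phaseS)\cap\LL^p(\phaseS)$, and $Z(g)=Z(T)\subset S$, so (1) makes $T*T_0$ function $p$-regular, whereupon the operator transfer returns $T$ $p$-regular.

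The main obstacle is the uniform handling of the endpoints $p=1$ and $p=\infty$ in the function transfer: there the relevant pairing is not between a space and its dual but involves the predual, so ``density'' must be read as weak*-density (as for \zc3 and informational completeness), and one must check that every auxiliary convolution genuinely lands in the space demanded by the cited condition --- e.g.\ $h*f\in\LL^q(\phaseS)$ and $T*T_0\in\LL^p(\phaseS)$ at $p=\infty$ --- which is exactly what the correspondence estimates \eqref{inequalities1} and Young's inequality guarantee. Once these bookkeeping points are settled, no further analysis is required; the whole proposition is formal given Props.~\ref{prop1}--\ref{prop3} and the zero-set identities.
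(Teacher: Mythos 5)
Your proof is correct, and its overall decomposition matches the paper's: establish the two transfer statements first, then obtain (1)$\Leftrightarrow$(2) formally from the zero-set identities $Z(f*T_0)=Z(f)$ and $Z(T*T_0)=Z(T)$ (which you, unlike the paper, explicitly justify via Prop.~\ref{propZ}(1) and multiplicativity of the Fourier/Weyl transform). The operator transfer is cited identically, as (a.0)$\Leftrightarrow$(a.6) of Props.~\ref{prop1}--\ref{prop3}. Where you genuinely diverge is the function transfer. The paper proves ``$f*T_0$ $p$-regular $\Rightarrow$ $f$ $p$-regular'' by the chain: density of $\schtn p *f$ in $\schtn p$, then duality to injectivity of $A\mapsto f*A$ on $\schtn q$, then associativity to injectivity of $g\mapsto f*g$ on $\LL^q(\phaseS)$; and it proves the converse by an approximation argument, using that $\LL^p(\phaseS)*T_0$ is dense in $\schtn p$ (condition (a.4) for $T_0$) so that any $A\in\schtn p$ is approximated by $g*T_0$ with $g$ a linear combination of translates of $f$. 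You instead exploit the observation that $f*T_0\in\tc$, so that Props.~\ref{prop1}--\ref{prop3} apply to it verbatim, and you run \emph{both} directions through the injectivity condition (a.1): the converse is immediate, and the forward direction combines Young's inequality (so $h*f\in\LL^q(\phaseS)$), condition (a.1) for the regular $T_0$, and the classical duality between function $p$-regularity and injectivity of $g\mapsto g*f$ on $\LL^q(\phaseS)$. Both routes rest only on machinery the paper has already established plus associativity of the convolution; yours is more symmetric and arguably cleaner (two short injectivity arguments, with the $p=1,\infty$ subtleties confined to reading the pairings against the correct predual, which you handle), while the paper's approximation argument has the merit of exhibiting the density of translates explicitly, keeping the weak* case $p=\infty$ visibly parallel to conditions (3.2)/(3.4).
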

\begin{proof}
The equivalence between the regularity of $T$ and of $T*T_0$ is just (a.0)$\Leftrightarrow$(a.6) for $a=1,2,3$ above. The corresponding statement for functions is proved in a similar fashion: $p$-regularity of $f*T_0$ implies the density of $\schtn p *f$ in $\schtn p$, which by duality implies the injectivity of $A\mapsto f*A$ on $\schtn q$, which by associativity of the convolution implies the injectivity of $g\mapsto f*g$ on $\LL^q(\phaseS)$, i.e., $p$-regularity of $f$. On the other hand, if $f$ is $p$-regular, then any $A\in \schtn p$ can be approximated by $g*T_0$ where $g$ is a linear combination of translates of $f$, so $f*T_0$ is $p$-regular. The equivalence of (1) and (2) now follows immediately, because $Z(T_0*f)=Z(f)$ and $Z(T_0*T)=Z(T)$.
\end{proof}

Clearly, $Z(T)$ being $p$-slim is a natural sufficient condition for $p$-regularity of $T$. There are several sufficient conditions for the $p$-slimness of a set $S$. As an example we give the following result, which uses Hausdorff dimension \cite{HdorffDim} as a measure of smallness. Intuitively, this describes the scaling of the number of balls needed to cover the set as a function of the radius of the balls, and is one of the standard characteristics of fractal sets. The connection with Hausdorff dimension and the closure of translates problem has been first noted by Beurling \cite{Beurling}, and extended to any dimension in \cite{Deny1950}. Since $0\leq h\leq2$ in our $2$-dimensional phase space, the bound ranges from $1$ to $2$. The proof follows by combining  \cite[Th.~4.(ii)]{Herz1957} with the fact that any set of type $U^q$, $p^{-1}+q^{-1}=1$  is also $p$-slim.

\begin{proposition} Let $2\geq p>4/(4-h)$, and let $S\subset \phaseS$ be a set of Hausdorff dimension $h$. Then $S$ is $p$-slim.
\end{proposition}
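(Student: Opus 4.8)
The plan is to reduce the statement entirely to the classical case for functions and then cite the stated harmonic-analysis input. By Proposition~\ref{propC}, the notion of $p$-slimness is an intrinsic property of the subset $S\subset\phaseS$ that can be tested equivalently on functions $f\in\LL^1(\phaseS)\cap\LL^p(\phaseS)$ or on operators $T\in\tc$: condition (1) and condition (2) there are equivalent. Hence it suffices to prove the \emph{function} version of the claim, namely that if $S$ has Hausdorff dimension $h$ and $2\geq p>4/(4-h)$, then $Z(f)\subset S$ forces $p$-regularity of $f$ for every $f\in\LL^1(\phaseS)\cap\LL^p(\phaseS)$. Once this is in hand, the operator statement follows for free via the equivalence in Proposition~\ref{propC}, since $Z(T_0*T)=Z(T)$.

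First I would recall the bridge between the two notions of smallness of zero sets used in the literature. A set is said to be of \emph{type $U^q$} if every function whose Fourier transform is supported in the set, and which lies in the appropriate space, must vanish; equivalently, in the language of spectral synthesis, such a set carries no nonzero pseudomeasure (or distribution) of the relevant class concentrated on it. The key structural fact, which the paper flags, is that a set of type $U^q$ with $p^{-1}+q^{-1}=1$ is automatically $p$-slim: the defining injectivity property of type $U^q$ is exactly the dual/annihilator formulation of the density of translates spelled out in conditions (2.1) and (2.3). So the logical skeleton is: Hausdorff dimension $h$ with the stated inequality on $p$ $\Rightarrow$ type $U^q$ $\Rightarrow$ $p$-slim.

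The quantitative heart is then the cited Herz result, \cite[Th.~4.(ii)]{Herz1957}, which gives a sufficient condition in terms of Hausdorff dimension for a set in $\R^n$ to be of type $U^q$ (equivalently, to support synthesis in the appropriate Sobolev/Bessel-potential sense). I would translate the dimension hypothesis: in our $2$-dimensional phase space $n=2$, and the inequality $p>4/(4-h)$ must be checked to be the exact form into which Herz's threshold converts after passing from $p$ to the conjugate exponent $q$ and inserting $n=2$ and the dimension $h$. Concretely, one sets $q=(1-p^{-1})^{-1}$, writes Herz's condition relating $q$, $n$ and $h$, and verifies by direct algebra that it is equivalent to $p>4/(4-h)$ with $p\le 2$ (the constraint $p\le 2$ corresponds to $q\ge 2$, where the potential-theoretic estimates have the right sign). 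This bookkeeping is routine but is where one must be careful that no factor of $2$ or $n$ is mislaid.

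The main obstacle is therefore not conceptual but one of faithful citation: the result rests on correctly invoking \cite[Th.~4.(ii)]{Herz1957} and on the comparison lemma (type $U^q$ $\Rightarrow$ $p$-slim), so the genuine work is ensuring that Herz's hypotheses are met and that his dimension threshold transcribes exactly to the stated interval for $p$. I expect the delicate point to be pinning down the endpoint behaviour and the precise inequality (strict versus non-strict, and the role of $p=2$), since the passage through the conjugate exponent and the two-dimensional normalization is exactly where an off-by-a-constant error would creep in; everything else is an immediate consequence of Proposition~\ref{propC} and the annihilator duality already established in Propositions~\ref{prop1}--\ref{prop3}.
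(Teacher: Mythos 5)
Your proposal follows essentially the same route as the paper: the paper's proof consists precisely of combining \cite[Th.~4.(ii)]{Herz1957} with the fact that any set of type $U^q$, $p^{-1}+q^{-1}=1$, is also $p$-slim, and the reduction to the function setting that you route through Proposition~\ref{propC} is exactly how the paper makes the operator statement follow. Your identification of the annihilator duality (conditions (2.1)/(2.3)) as the reason type $U^q$ implies $p$-slim, and your flagging of the exponent bookkeeping in transcribing Herz's threshold to $p>4/(4-h)$ in dimension $n=2$, are both consistent with (indeed slightly more explicit than) the paper's argument.
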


It seems that a necessary and sufficient characterization of $p$-slim sets is a hard problem. More importantly for our context, the whole research program initiated by Wiener's remark, namely to extend the clean characterizations of Prop.~\ref{propZ} to values $p\neq1,2,\infty$ has been resolved in the negative \cite{LevOlev}: information about zero sets is not in general sufficient to decide regularity. The following Proposition rephrases this result in the operator context. 

\begin{proposition} Let $1<p<2$. Then there exist $T,T'\in \tc$ such that $T$ is $p$-regular and $T'$ is not, but $Z(T) = Z(T')$.
\end{proposition}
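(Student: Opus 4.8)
The plan is to import the negative solution of Wiener's ``closure of translates'' problem for $\LL^p$-functions, due to Lev and Olevskii \cite{LevOlev}, and to transfer it to operators through the translation principle of Prop.~\ref{propC}. That proposition tells us that for a fixed regular $T_0\in\tc$ (say the projection onto a Gaussian) and any $f\in\LL^1(\phaseS)\cap\LL^p(\phaseS)$ the operator $f*T_0$ lies in $\tc$, satisfies $Z(f*T_0)=Z(f)$, and is $p$-regular if and only if $f$ is. So it suffices to produce two functions $f,f'\in\LL^1(\phaseS)\cap\LL^p(\phaseS)$ on the two-dimensional phase space $\phaseS=\Rl^2$ with $Z(f)=Z(f')$, of which exactly one is $p$-regular; the required operators are then $T=f*T_0$ and $T'=f'*T_0$.

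Lev--Olevskii provide, for each $1<p<2$, precisely such a pair of functions on the line $\Rl$. The genuine work is therefore to lift their one-dimensional example to the plane while preserving both the zero set and the (non-)regularity. First I would fix a Gaussian $h\in\LL^1(\Rl)\cap\LL^p(\Rl)$, whose Fourier transform $\widehat h$ never vanishes, and set $F=f\otimes h$ and $F'=f'\otimes h$ on $\Rl^2$. Since $\widehat F(\xi,\eta)=\widehat f(\xi)\,\widehat h(\eta)$ and $\widehat h$ has no zeros, we get $Z(F)=Z(f)\times\Rl=Z(f')\times\Rl=Z(F')$, and both functions lie in $\LL^1(\Rl^2)\cap\LL^p(\Rl^2)$.

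The key step is to show that tensoring with the regular factor $h$ neither creates nor destroys $p$-regularity, i.e.\ $f\otimes h$ is $p$-regular iff $f$ is. By Hahn--Banach duality (cf.\ Def.~\ref{pregular}), $p$-regularity means that no nonzero element of the dual $\LL^q(\phaseS)$, with $1/p+1/q=1$, annihilates all translates. For the forward direction I would take $\Phi\in\LL^q(\Rl^2)$ annihilating every translate of $f\otimes h$, form the partial pairings $g_a(y)=\int \Phi(x,y)\,f(x-a)\,dx$, and note that $g_a\in\LL^q(\Rl)$ by H\"older and Fubini (with $\Vert g_a\Vert_q\leq\Vert f\Vert_p\Vert \Phi\Vert_q$). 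Each $g_a$ annihilates all translates of $h$; since $h$ is regular, hence $p$-regular, this forces $g_a=0$. A continuity-in-$a$ argument (the convolution of an $\LL^q$- with an $\LL^p$-function is continuous) together with a countable exhaustion over rational $a$ then yields that for almost every $y$ the slice $\Phi(\cdot,y)$ annihilates all translates of $f$, so $p$-regularity of $f$ gives $\Phi=0$. The converse is easy: if a nonzero $\phi\in\LL^q(\Rl)$ annihilates the translates of $f$, then $\phi\otimes\chi$ for any nonzero $\chi\in\LL^q(\Rl)$ annihilates those of $f\otimes h$.

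Combining these facts, $F$ is $p$-regular while $F'$ is not, they share a common zero set, and passing to $T=F*T_0$, $T'=F'*T_0$ via Prop.~\ref{propC} completes the argument. I expect the main obstacle to be exactly the dimension lift: one must verify that tensoring with the regular factor preserves $p$-regularity in \emph{both} directions (the slicing-and-continuity argument for the forward implication being the most delicate point), and one must ensure that the Lev--Olevskii functions can be taken in $\LL^1\cap\LL^p$ so that the resulting operators are genuinely trace class and fall under Prop.~\ref{propC}. The latter can be arranged either directly in their construction or by convolving with a fixed integrable kernel of nowhere-vanishing Fourier transform, which alters neither $Z(f)$ nor the regularity class.
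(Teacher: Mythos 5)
Your proposal is correct and follows essentially the same route as the paper's proof: take the Lev--Olevskii one-dimensional counterexample, lift it to phase space by tensoring with a fixed regular (Gaussian) factor so that zero sets and $p$-regularity are preserved, and transfer to operators as $T=f*T_0$, $T'=f'*T_0$ with $T_0$ regular via Prop.~\ref{propC}. The only difference is one of detail: you spell out the duality/slicing argument showing that tensoring with a regular factor preserves $p$-regularity in both directions, a step the paper compresses into the remark that it ``follows easily'' from the characterization of $p$-regularity by injectivity of $g\mapsto g*f$ on $\LL^q$.
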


\begin{proof} We need to extend the example established for functions of one variable in \cite[Cor.~2]{LevOlev}, to functions on phase space $\phaseS$.  Clearly, if $h,g\in L^1(\Rl)\cap \LL^p(\Rl)$, then $(q,p)\mapsto h(q)g(p)$ is $p$-regular iff the spans of translates of $h$ and $g$ are both dense in $\LL^p(\Rl)$. This follows easily by using the fact that $f\in \LL^1\cap \LL^p$ is $p$-regular iff $g*f = 0$ implies $g=0$ for all $g\in \LL^q$. Hence, we can use \cite[Cor.~2]{LevOlev} to conclude that there exist two functions $f,f'\in \LL^1(\phaseS)\cap \LL^p(\phaseS)$, such that $f$ is $p$-regular and $f'$ is not, but $Z(f)= Z(f')$. Then $T=f*T_0$ and $T'=f'*T_0$ have the stated properties if $T_0\in\tc$ is any regular operator.
\end{proof}

\section{Strict implications}

Here we show that the implications \zc1$\Rightarrow$\zc2 and \zc2$\Rightarrow$\zc3 are in fact strict.

\begin{proposition}\label{thm12}
  There is a positive trace class operator $T$ satisfying \zc2 but not \zc1.
\end{proposition}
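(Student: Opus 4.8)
The plan is to produce an explicit positive trace-class operator whose Weyl transform is radially symmetric and vanishes on a single circle. The natural source of such examples is the number states (Hermite functions): for these $\widehat T$ is known explicitly in terms of Laguerre polynomials, so the zero set is a union of concentric circles. Such a set is one-dimensional in $\phaseS=\Rl^2$, hence of measure zero, yet nonempty as soon as the state is not the Gaussian ground state. The smallest nontrivial case already suffices.

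Concretely, I would take $T=\vert 1\rangle\langle 1\vert$, the rank-one projection onto the first excited number state $\vert 1\rangle$ (the first Hermite function). This is manifestly positive, trace class, and of trace one, so it characterizes a genuine covariant phase space observable. The only computation needed is the Weyl transform $\widehat T(q,p)=\langle 1\vert W(q,p)\vert 1\rangle$. Using the paper's convention one checks that $W(q,p)$ is the displacement operator $D(\alpha)$ with $\alpha=(q+ip)/\sqrt 2$, so that $|\alpha|^2=\tfrac12(q^2+p^2)$. Invoking the standard diagonal matrix-element formula $\langle n\vert D(\alpha)\vert n\rangle=e^{-|\alpha|^2/2}L_n(|\alpha|^2)$ and using $L_1(t)=1-t$, one obtains
\begin{equation*}
\widehat T(q,p)=e^{-(q^2+p^2)/4}\Bigl(1-\tfrac12(q^2+p^2)\Bigr).
\end{equation*}

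From this the two claims are immediate. The transform vanishes exactly on the circle $q^2+p^2=2$, which is nonempty, so $Z(T)\neq\emptyset$ and condition \zc1 fails. On the other hand this circle is a smooth one-dimensional curve in $\phaseS=\Rl^2$ and hence a Lebesgue-null set, so $Z(T)$ has measure zero and \zc2 holds. Thus $T$ satisfies \zc2 but not \zc1, as required.

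The only real work is justifying the Laguerre-polynomial formula for $\widehat T$ in the paper's Weyl conventions; this is a classical and routine computation (via the coherent-state overlap, or the Hermite/Laguerre generating function), and the precise normalization constants are irrelevant --- all that matters is that the argument of $L_1$ is a strictly positive multiple of $q^2+p^2$, so that the zero set is a genuine circle rather than a single point or the empty set. I therefore do not anticipate any serious obstacle: the content of the proposition is simply that positivity of $T$ does \emph{not} force its zero set to have positive measure, and the first excited Fock state already witnesses this.
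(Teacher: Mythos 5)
Your proposal is correct and is essentially identical to the paper's own proof: the paper also takes $T=\vert\varphi_1\rangle\langle\varphi_1\vert$, the projection onto the first excited harmonic oscillator (Fock) state, computes its Weyl transform as a Gaussian times $\bigl(\tfrac12-\tfrac14(q^2+p^2)\bigr)$, and observes that the zero set is the circle $q^2+p^2=2$, nonempty but of Lebesgue measure zero. Your route to the transform via the displacement-operator/Laguerre formula is just a standard alternative way of doing the same computation.
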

\begin{proof}
We  take  $T=T_1 = \vert \varphi_1\rangle\langle\varphi_1 \vert$, where $\varphi_1(q) = \left( \frac{1}{\pi} \right)^{1/4} q e^{-\frac{q^2}{2}}$ is the first excited state of the harmonic oscillator. The Weyl transform is then
\begin{equation*}
    \widehat{T}_1 (q,p)= \left(\frac{1}{2}- \frac{1}{4}(q^2+p^2)\right) e^{-\frac{1}{4}(q^2+p^2)}
\end{equation*}
so clearly the zero set  $Z(T_1)=\{ (q,p) \in\phaseS \mid q^2+p^2=2 \}$ is nonempty but of measure zero.
\end{proof}

\begin{proposition}\label{thm23}
  There is a positive trace class operator $T$ satisfying \zc3 but not \zc2.
\end{proposition}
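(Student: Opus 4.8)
The plan is to get positivity for free by taking a rank-one projection and to push all of the difficulty into a one-dimensional set construction. Fix (for the moment) a measurable set $A\subseteq\R$ with $0<|A|<\infty$, and let $\psi\in\hil$ be the unit vector whose momentum wave function is $\widehat\psi=|A|^{-1/2}\mathbf 1_A$. Put $T=|\psi\rangle\langle\psi|$, a positive trace-class operator of trace one. Writing $W(q,p)$ in the momentum representation as a translation composed with a modulation, a short computation gives
\[
 \widehat T(q,p)=\langle\psi|W(q,p)|\psi\rangle=\frac{e^{i\theta(q,p)}}{|A|}\,\widehat{\mathbf 1_{A\cap(A+p)}}(q),
\]
for some real phase $\theta$. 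Thus, at a fixed momentum shift $p$, the Weyl transform of $T$ is, up to a unimodular factor, the ordinary Fourier transform in $q$ of the indicator function of the overlap $A\cap(A+p)$.

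Everything is now controlled by the overlap function $r(p):=|A\cap(A+p)|$. If $r(p)=0$ then $\widehat{\mathbf 1_{A\cap(A+p)}}\equiv0$, so the whole line $\{(q,p)\mid q\in\R\}$ lies in $Z(T)$; hence $\{(q,p)\in\phaseS\mid p\in E_A\}\subseteq Z(T)$, where $E_A:=\{p\mid r(p)=0\}$. Consequently $Z(T)$ has positive (indeed infinite) measure as soon as $|E_A|>0$, so \zc2 fails. For \zc3 I will arrange that for every $p\notin E_A$ the overlap $A\cap(A+p)$ is a bounded set; then $\widehat{\mathbf 1_{A\cap(A+p)}}$ is entire and not identically zero (its value at $q=0$ is $r(p)>0$), so its zeros in $q$ are isolated. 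If in addition $E_A$ is nowhere dense, then for a dense set of $p$ the slice $\widehat T(\cdot,p)$ is nonzero off a discrete set, so $\{\widehat T\neq0\}$ is dense in $\phaseS$; since $\widehat T$ is continuous, $Z(T)$ has dense complement, i.e. \zc3 holds. It therefore suffices to construct $A$ of positive finite measure whose overlap set $E_A$ has positive measure and empty interior, with $A\cap(A+p)$ bounded for $p\notin E_A$.

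The construction of $A$ is the heart of the matter. Let $E\subseteq\R$ be a symmetric fat Cantor set: closed, of positive measure, with empty interior; then $\R\setminus E$ is open and dense. I take $A=\bigcup_n[x_n,x_n+\ell_n]$ with $\sum_n\ell_n<\infty$ (so $|A|<\infty$), choosing the $x_n$ and then shrinking the $\ell_n$ by a routine inductive procedure so that the difference intervals $I_{n,m}=[x_n-x_m-\ell_m,\,x_n-x_m+\ell_n]$ (i) all avoid $E$, (ii) are pairwise disjoint, and (iii) are dense in $\R\setminus E$. Since $r(p)>0$ precisely when $p\in\bigcup_{n,m}I_{n,m}$, property (i) gives $E\subseteq E_A$, so $|E_A|\ge|E|>0$; property (iii) makes $\bigcup_{n,m}I_{n,m}$ dense, so $E_A$ is nowhere dense; and property (ii) ensures that each $p\notin E_A$ lies in a single $I_{n,m}$, whence $A\cap(A+p)=[x_n,x_n+\ell_n]\cap([x_m,x_m+\ell_m]+p)$ is bounded. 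Feeding this $A$ into the formula above produces the desired $T$, which by Prop.~\ref{propZ} is $\infty$-regular but not $2$-regular.

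The main obstacle is exactly this inductive construction of $(x_n)$: the three requirements pull against one another — keeping all differences off the fat set $E$ manufactures the positive-measure family of empty-overlap shifts, keeping the difference intervals disjoint and summably fattened keeps $|A|$ finite and each overlap bounded, while forcing the differences to be dense is what makes $Z(T)$ have dense complement. That all three can be met simultaneously hinges on $E$ being nowhere dense, so that $\R\setminus E$ is open and dense and the admissible positions for each new $x_{N+1}$ form a finite intersection of open dense sets, hence remain dense. I note the parallel classical route that fits the paper's philosophy: take $T=g*T_0$ with $T_0$ a Gaussian (so $\widehat{T_0}$ never vanishes and $Z(T)=Z(g)$) and $g\ge0$ a density of product form $g(q,p)=h(q)h(p)$; this reduces \zc3-but-not-\zc2 to the existence of a nonnegative $h\in\LL^1(\R)$ whose Fourier transform (a positive-definite function) vanishes on a fat, nowhere-dense set — the same one-dimensional phenomenon, seen on the other side of the correspondence.
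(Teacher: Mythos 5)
Your reduction in the first two paragraphs is sound and genuinely different from the paper's route (the paper takes $T=f*T_0$ with $T_0$ regular and builds an explicit nonnegative $f$ with $\widehat f\geq 0$ from scaled Fej\'er-type kernels): the Weyl transform computation for $T=|\psi\rangle\langle\psi|$ with $\widehat\psi=|A|^{-1/2}\mathbf{1}_A$ is correct, and it is true that producing a set $A$ of finite positive measure with $E_A=\{p:|A\cap(A+p)|=0\}$ of positive measure, $E_A$ nowhere dense, and $A\cap(A+p)$ bounded for a dense set of $p\notin E_A$ would finish the proof. The gap is in the construction of $A$, which you yourself flag as the heart of the matter. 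The justification ``the admissible positions for each new $x_{N+1}$ form a finite intersection of open dense sets, hence remain dense'' is false. The constraint is not that the \emph{points} $x_{N+1}-x_m$ avoid the nowhere dense set $E$; it is that the whole \emph{interval} $[x_{N+1}-x_m-\ell_m,\,x_{N+1}-x_m+\ell_{N+1}]$ avoids $E$, where $\ell_m$ is already fixed from earlier steps. Equivalently, $x_{N+1}$ must avoid the thickened set $x_m+E+[-\ell_{N+1},\ell_m]$. Since $E$ has positive measure, such thickenings have nonempty interior and their complements are \emph{not} dense: if every gap of the fat Cantor set $E$ is shorter than $\ell_m$, then $E+[0,\ell_m]$ covers the entire convex hull of $E$, so no difference $x_{N+1}-x_m$ can ever land inside any gap of $E$. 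This is fatal to requirement (iii), because making the difference intervals dense in $\R\setminus E$ requires placing them inside arbitrarily short gaps of $E$ --- exactly where the thickened-set obstruction bites. Nowhere-density of $E$ makes ``a point avoids $E$'' a dense condition, but it does not make ``an interval of prescribed positive length avoids $E$'' a dense condition, and the distance between these two statements is precisely the difficulty your induction has to overcome.

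There are also two smaller slips: requirement (ii) is impossible as stated, since the diagonal intervals $I_{n,n}=[-\ell_n,\ell_n]$ all contain $0$ (so the family is never pairwise disjoint, and one needs $0\notin E$, e.g.\ $E=F\cup(-F)$ with $F$ a fat Cantor set in $[1,2]$); relatedly, at $p=0\notin E_A$ the overlap is $A\cap A=A$, which is unbounded, contradicting your claim that \emph{every} $p\notin E_A$ lies in a single $I_{n,m}$ and has bounded overlap (harmless in principle, since a dense set of good $p$ suffices, but it shows the bookkeeping must be restated). The construction can be repaired, but by a different mechanism than the one you give: add the intervals in \emph{pairs}, one pair per target $y_k$ in a countable dense subset of $\R\setminus E$, with both lengths chosen tiny \emph{after} seeing $\mathrm{dist}(y_k,E)$, so that the designated within-pair difference interval sits near $y_k$ and misses $E$; then use boundedness of $E$ (hence of all forbidden thickened sets, which are compact) to translate each new pair far out, so that all cross-differences with previously placed intervals land outside the convex hull of $E$ and away from all earlier difference intervals; finally, replace pairwise disjointness by a Borel--Cantelli argument (using $\sum_n\ell_n<\infty$) showing that almost every $p$ lies in only finitely many difference intervals, which is enough to find, in every open set, a $p$ with bounded overlap of positive measure. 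With these changes your strategy goes through; as written, the inductive step does not.
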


\begin{proof}
We have to construct a positive operator $T_2$ of trace one such that $Z(T_2)$ is of nonzero measure but has dense complement. We choose the form
$T_2=f*T$, with $T$ satisfying \zc1, $T\geq0$ and $\tr T=1$. Thus we reduce this to the construction of a function $f\in\LL^1(X)$, which must be positive with integral one, such that the zero set of $\widehat f$, which is equal to $Z(T_2)$, satisfies the required conditions. We further specialize this to a one-dimensional construction, by setting $f(x)=f(q,p)=\phi(q)e^{-p^2}$. Now $\phi\in \LL^1(\R)$ has to be positive, and its zero set has to meet the description. Since the zero set of $\widehat f$ now consists of infinite strips, $Z(T_2)$ constructed in this way will even have infinite measure.

As the starting point for our construction we choose a positive $\varphi\in \LL^1 (\R)$ such that also $\widehat{\varphi}$ is positive and $\widehat{\varphi} (q)\neq 0$ if and only if $q \in (-1,1)$. This is satisfied e.g.  when $\varphi = \widehat{\chi *\chi}$ where $\chi$ is the characteristic function of the interval $\left(-\tfrac{1}{2}, \tfrac{1}{2}\right)$. For each $\lambda >0 $ define
$$
\widehat{\psi}_\lambda(q) = \sum_{k\in\Z} \alpha_k \widehat{\varphi}(\lambda(q+k))
$$
where $\left(\alpha_k\right)_{k\in\Z}\in l^1(\Z)$ and $\alpha_k >0$ for all $k\in\Z$. Then $\widehat{\psi}_\lambda\in \LL^1(\R)$ and the inverse Fourier transform  gives
$$
\psi_\lambda (q) = \tfrac{1}{\lambda}  \varphi \left(\tfrac{q}{\lambda}\right)\sum_{k\in\Z} e^{-i q k} \alpha(k)
$$
so clearly $\psi_\lambda\in \LL^1 (\R)$. To ensure the positivity of $\psi_\lambda$ we need to require that the sum is positive  for all $q\in\R$. The choice $\alpha_k = 2^{-\vert k\vert}$ will work here. Finally, let $\lambda_n, \mu_n >0 $ for all $n\in\N$ and  define
$$
\widehat{\phi} (q) = \sum_{n=1}^\infty \beta_n \widehat{\psi}_{\lambda_n} \left( \mu_n q\right)
$$
where $\left(\beta_n\right)_{n\in\N}\in l^1 (\N)$ and $\beta_n>0$ for all $n\in\N$. In order to ensure that $\widehat{\phi}\in \LL^1 (\R)$ we  further assume that $\sup_{n\in\N} (\mu_n\lambda_n)^{-1} <\infty$. By construction, $\widehat{\phi}$ and $\phi$ are nonnegative, integrable functions and $\widehat{\phi}(q)=0$ if and only if $\lambda_n (\mu_n q + k)\in (-1,1)^c$ for all $n\in\N, k\in\Z$. The zero set $Z(\phi)$ of $\widehat{\phi}$ is thus
\begin{equation}
Z(\phi) = \bigcap_{n\in\N} \left( \tfrac{1}{\mu_n}\Z   + \left(-\tfrac{1}{\mu_n \lambda_n}  ,  \tfrac{1}{\mu_n\lambda_n}\right)\right)^c
\end{equation}
Clearly, we can normalize $\phi$ such that $f$ has norm one.

What remains is to show that the scaling parameters can be chosen so that $Z(\phi)$ has positive measure and dense complement. A convenient choice is now $\mu_n = 2^n$, $\lambda_n =2^{n+2}$. Then  $Z(\phi)^c$ is clearly dense since $\bigcup_{n\in\N}\frac{1}{2^n} \Z\subset Z(\phi)^c$. To show that $Z(\phi)$ is of positive measure it is sufficient to show that the measure of $Z(\phi)^c \cap [0,1]$ is strictly less than $1$.  For that purpose, note that $Z(\phi)^c \cap [0,1]= \bigcup_{n\in\N} I_n$ where
$$
I_n = \left( \tfrac{1}{2^n} \Z + \left(-\tfrac{1}{2^{2(n+1)}}, \tfrac{1}{2^{2(n+1)}} \right) \right) \cap[0,1]
$$
and the measure of $I_n$ is $\tfrac{1}{2^{n+1}}$. It follows from the subadditivity of the Lebesgue measure that the measure of $Z(\phi)^c \cap [0,1]$ is less than $\sum_{n=1}^\infty \tfrac{1}{2^{n+1}}=\tfrac{1}{2}$.  In other words, $Z(\phi) \cap [0,1]$ is of positive measure.
\end{proof}

\section{Extensions to more general phase spaces}
A general phase space $\phaseS$ can be defined as a locally compact abelian group equipped with an antisymmetric ``symplectic'' bi-character. More commonly one considers pairings $\phaseS=G\times\widehat G$, where $G$ is a locally compact abelian group, and $\widehat G$ is its dual.
The work \cite{Werner1984} was written for $G=\Rl^n$, but the extension to general $G$ is work in progress (J.S). We do not wish to enter subtleties here which are better discussed separately. Therefore, we only give a simple extension of our propositions, which is easily proved and still covers many practical cases.

\begin{proposition}
Suppose that $\phaseS=G\times\widehat G$ is a phase space such that $G$ is a finite product of copies of $\Rl$, $\Z$, the 1- torus group $\T$, and finite abelian groups. \\
Then Props.~\ref{prop1}--\ref{propZ} hold mutatis mutandis, and \zc1$\Rightarrow$\zc2$\Rightarrow$\zc3. \\
Suppose that one of the reverse implications also holds. Then $G$ is finite.
\end{proposition}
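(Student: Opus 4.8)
The plan is to split the statement into its three claims and to put essentially all the work into the last one. \emph{For the fact that Props.~\ref{prop1}--\ref{propZ} carry over,} I would first note that for each allowed building block the Schrödinger representation of the Weyl system over $\phaseS=G\times\widehat G$ is irreducible and square integrable, so the whole apparatus of Sect.~2 (the $\Z_2$-graded convolution algebra, the Weyl/symplectic Fourier transform, Plancherel, the correspondences \eqref{correspond} and the estimates \eqref{inequalities1}) is available; for a product group it is the tensor product of the one-factor structures. Granting this, the proofs of Props.~\ref{prop1}--\ref{prop3} can be rerun verbatim, since they use only Banach-space duality, associativity of convolution, Wiener's approximation theorem (valid on every LCA group), and the existence of one regular operator. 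The last point is the only thing needing a word: by Prop.~\ref{propZ}(1) regularity means $\widehat{T_0}$ nowhere zero, and since the Weyl transform of a tensor product is the product of the factor transforms, a regular $T_0$ is built factorwise (a Gaussian on each $\R$, a periodized-Gaussian state on each $\T$ or $\Z$, and the operator with constant Weyl transform on each finite factor). Prop.~\ref{propZ} itself then follows from Wiener, Plancherel and the dense-complement characterization of $\infty$-regularity, all classical on LCA groups, together with $Z(T)=Z(T*T)$. \emph{The chain \zc1$\Rightarrow$\zc2$\Rightarrow$\zc3} holds on any LCA group: the first implication is trivial, and for the second one only needs that a nonempty open set has positive Haar measure, so a null set contains no open set and hence has dense complement.

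For the final claim I argue by contraposition. If $G$ is finite then $\phaseS$ is finite and discrete, every point has positive mass and the only dense subset is the whole space, so \zc1, \zc2, \zc3 coincide and both reverse implications hold trivially; this is the direction ``$G$ finite $\Rightarrow$ reverse implications hold''. Conversely, if $G$ is infinite it has a factor $G_0\in\{\R,\Z,\T\}$; writing $G=G_0\times H$ and $\phaseS=\phaseS_0\times\phaseS_H$ with $\phaseS_0=G_0\times\widehat{G_0}$, I choose a regular trace-class $T_H$ on $\LL^2(H)$ (positivity is not required here, by the Remark after Prop.~\ref{propZ}), so that $\widehat{T_H}$ never vanishes. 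Then $\widehat{T_0\otimes T_H}=\widehat{T_0}\,\widehat{T_H}$ gives $Z(T_0\otimes T_H)=Z(T_0)\times\phaseS_H$, and since $\phaseS_H$ is $\sigma$-finite of positive measure and is the full complementary space, this product has exactly the same status under \zc1/\zc2/\zc3 as $Z(T_0)$ on $\phaseS_0$. Hence it suffices to violate \emph{both} reverse implications on each of the single-mode spaces $\R^2$, $\Z\times\T$ and $\T\times\Z$.

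For $\phaseS_0=\R^2$ this is precisely Props.~\ref{thm12} and \ref{thm23}. For the other two I reduce further to classical functions: taking $T=f*T_0$ with $f\ge0$, $\int f=1$ and $T_0$ regular positive, one has $Z(T)=Z(f)=\{\widehat f=0\}$, and factoring $f=g\otimes h$ with one factor chosen to have a nowhere-vanishing transform leaves the zero set governed by a single classical factor. In both $\Z\times\T$ and $\T\times\Z$ the continuous (positive-measure) direction of the zero set is dual to the $\Z$-variable, so both cases reduce to the \emph{same} circle problem: find a nonnegative sequence $(c_n)\in\ell^1(\Z)$ whose Fourier series $F(t)=\sum_n c_n e^{int}$ on $\T$ has (i) a nonempty zero set of measure zero, and (ii) a zero set of positive measure with dense complement. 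Part (i) is immediate, e.g. $F(t)=1+\cos t=2\cos^2(t/2)$, whose only zero is $t=\pi$, with coefficients $c_0=1$, $c_{\pm1}=\tfrac12\ge0$.

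Part (ii) is where I expect the real difficulty, and it is the circle analogue of Prop.~\ref{thm23}. The dyadic scheme of that proof is already phrased through the lattices $2^{-n}\Z$ and so transfers naturally to $\T=\Rl/2\pi\Z$: I would take a superposition $F=\sum_n\beta_n F_n$ of nonnegative, positive-definite bumps concentrated off ever-finer dyadic neighbourhoods, so that $\{F=0\}$ is an intersection of complements of such neighbourhoods, of positive measure, while its complement contains all dyadic points and is therefore dense; keeping every Fourier coefficient nonnegative (hence $c_n\ge0$ and the resulting operator positive) is arranged just as the simultaneous nonnegativity of $\phi$ and $\widehat\phi$ is arranged in Prop.~\ref{thm23}. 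Equivalently, one may sample and periodize the real-line function of Prop.~\ref{thm23}, so that Poisson summation converts its nonnegativity into nonnegative coefficients $c_n$ and a careful choice of scale keeps the periodized zero set fat with dense complement. The one genuinely technical point is verifying that this periodization (or the direct dyadic superposition) simultaneously preserves nonnegativity of the coefficients and the measure-theoretic \emph{and} topological size of the zero set; everything else in the proof is bookkeeping or a direct appeal to Props.~\ref{thm12}--\ref{thm23}.
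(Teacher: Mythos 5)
Your overall architecture --- a factorwise construction of a regular element, tensoring a one-factor counterexample with regular elements on the remaining factors, and reduction of the $\Z$/$\T$ case to a single circle problem via $T=f*T_0$ --- is exactly the paper's. But there is one genuine error: the periodized (or sampled) Gaussian is \emph{not} regular on the $\Z$ and $\T$ factors. Take $\psi(n)=e^{-n^2/2}$ in $\ell^2(\Z)$; the Weyl transform of $|\psi\rangle\langle\psi|$ is, up to a phase, $e^{-m^2/4}\sum_n e^{-(n-m/2)^2}e^{int}$, and at $t=\pi$ with $m$ odd the substitution $n\mapsto m-n$ sends each term of the sum to minus itself, so the transform vanishes on all points $(m,\pi)$ with $m$ odd. (This is the familiar zero of the Zak transform of a Gaussian; the wrapped Gaussian on $\LL^2(\T)$ fails in the same way.) So by your own criterion, Prop.~\ref{propZ}(1) mutatis mutandis, these states are not regular. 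Since the existence of a regular element on every factor is the load-bearing ingredient --- it is needed to rerun Props.~\ref{prop1}--\ref{propZ}, it enters the tensoring step, and it is the $T_0$ in your reduction $T=f*T_0$ --- the proof does not close as written. The fix is the paper's: on $\ell^2(\Z)$ take the one-sided geometric vector $\psi(n)=a^n$ for $n\geq0$, $\psi(n)=0$ for $n<0$, with $0<a<1$; then $\langle\psi|W(m,t)\psi\rangle$ is proportional (up to a phase) to $a^{|m|}/(1-a^2e^{it})$, which never vanishes. Your constant-Weyl-transform operator on the finite factors is fine, and differs only harmlessly from the paper's argument that generic pure states are regular there.

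Apart from this, your proposal is sound and in places more careful than the paper's own wording. Choosing $f=g\otimes h$ with the $h$-factor having nowhere-vanishing transform is the right move (the paper's phrase ``$f$ depending only on the $\Z$ coordinate'', read literally, would place the open sets $\{m\}\times\T$, $m\neq0$, inside the zero set and destroy \zc3); and your explicit $F(t)=1+\cos t$ settles \zc2-without-\zc1 on the circle, where the paper merely calls this easy. For \zc3-without-\zc2 your periodization route does work, and the technical point you flag can be discharged in one line: in Prop.~\ref{thm23} the set $Z(\phi)$ is invariant under integer translations (each excluded neighbourhood is a union of $2^{-n}\Z$-translates and $\Z\subset 2^{-n}\Z$), and since $\widehat\phi\geq0$ the periodization $\sum_k\widehat\phi(\cdot+k)$ vanishes exactly where all translates do, i.e.\ on $Z(\phi)$ modulo $1$, preserving both positive measure and dense complement; Poisson summation makes the Fourier coefficients samples of $\phi$, hence nonnegative, with summability read off from the quadratic decay of $\varphi=\widehat{\chi*\chi}$. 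Two minor points: the statement only requires ``$G$ infinite $\Rightarrow$ both implications strict'', so your converse direction, while true, is not needed; and in the tensoring step you should keep $T_H$ positive (costing nothing, since the factorwise regular elements can be chosen positive), because by the remark following Prop.~\ref{propZ} the counterexamples are only significant for positive $T$.
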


\begin{proof}For a finite cartesian product of groups written additively as $G=\bigoplus_iG_i$, we get Weyl operators which are tensor products with respect to $\hil=\bigotimes_i\hil_i$. The existence of a regular trace class operator can therefore be shown by tensoring such elements for each factor. We have seen this already for $G_i=\Rl$. For $G_i=\Z$, which is equivalent to $G_i=\T=\widehat\Z$, we can take a vector $\psi\in\ell^2(\Z)$ with $\psi(n)=a^n$ for $n\geq0$ and $\psi(n)=0$ for $n<0$. Finally, on a finite group the Hilbert space of the regular representation is also finite dimensional. For fixed $x$ the equation $\bra\psi |W(x)\psi\ket=0$ holds only on a manifold of vectors of smaller dimension, and since there are only finitely many $x$, we have that almost all pure states are regular.
This was the only specific property of $G$ needed in the proofs of the first three propositions, and the rest of the proofs is entirely parallel to the ones given above.

Suppose now that one of the implications \zc1$\Rightarrow$\zc2$\Rightarrow$\zc3 is strict for any one of the factors $G_i$. By tensoring the appropriate counterexample $T_i$ with regular elements $T_j$ we get an element $T$ whose zero set is empty/measure zero/without open sets if and only if $Z(T_i)$ has these properties. Hence in order to exclude all but finite factors, we only need to show that the inclusions are strict for $G=\Z$.
By taking $T=T_0*f$ with $T_0$ regular, and $f$ depending only on the $\Z$ coordinate, we can reduce this to finding appropriate functions on $\T$, exactly as in the proof of Prop.~\ref{thm23}. Finding $f\in\ell^1(\Z)$, whose Fourier transform has only some isolated zeros is easy. For an $f$ such that the zero set of $\widehat f$ has positive measure, but contains no open sets, we can take the same example as in the proof of Prop.~\ref{thm23}.
\end{proof}

\noindent
\textbf{Acknowledgment.} This work was partially supported by the Academy of Finland Grant No. 138135. JS was supported by the Finnish Cultural Foundation. JK was supported by Emil Aaltonen Foundation.

%\bibliographystyle{amsplain}
%\bibliography{infocomp}

\begin{thebibliography}{99}
\bibitem{Leonhardt} U. Leonhardt, {\em Measuring the Quantum State of Light}, Cambridge University Press, Cambridge, 1997.
\bibitem{Prugovecki1977} E. Prugove{\v c}ki, Information-theoretical aspects of quantum measurement, {\em Int. J. Theor. Phys.} {\bf 16}, 321 (1977).
\bibitem{Ali1977} S. T. Ali, and E. Prugove{\v c}ki, Classical and quantum statistical mechanics in a common Liouville space, {\em Physica}  {\bf 89 A}, 501 (1977).
\bibitem{Busch1989} P. Busch, and P. Lahti, The determination of the past and the future of a physical system in quantum mechanics, {\em Found. Phys.} {\bf 19}, 633 (1989).
\bibitem{Busch1995} P. Busch, G. Cassinelli, and P. J. Lahti, Probability structures for quantum state spaces, {\em Rev. Math. Phys.} {\bf 7}, 1105 (1995).
\bibitem{Healy1995} D. M. Healy, Jr., and F. E. Schroeck, Jr., On informational completeness of covariant localization observables and Wigner coefficients, {\em J. Math. Phys.} {\bf 36}, 453 (1995).
\bibitem{DAriano2004} G. M. D'Ariano, P. Perinotti, and M. F. Sacchi, Informationally complete measurements and group representation, {\em J. Opt. B: Quantum Semiclass. Opt.} {\bf 6}, S487 (2004).
\bibitem{Wunsche1997} A. W\"unsche and V. Bu{\v z}ek, Reconstruction of quantum states from propensities, {\em Quantum Semiclass. Opt.} {\bf 9}, 631 (1997).
\bibitem{ReedSimon2} M. Reed, and B. Simon, {\em Methods of Modern Mathematical Physics. II: Fourier Analysis, Self-Adjointness}, Academic Press, San Diego, 1975.
\bibitem{Holevo1979} A. S. Holevo, Covariant measurements and uncertainty relations, {\em Rep. Math. Phys.} {\bf 16}, 385 (1979).
\bibitem{Werner1984} R. F. Werner, Quantum harmonic analysis on phase space, {\em J. Math. Phys.} {\bf 25}, 1404 (1984).
\bibitem{Wiener} N. Wiener, Tauberian theorems, {\em Ann. Math.} {\bf 33}, 1 (1932).
\bibitem{Schaefer} H. H. Schaefer, {\em Topological Vector Spaces}, Springer, 1971.
\bibitem{Edwards} R. E. Edwards, Spans of translates in $L^p(G)$, {\em J. Austral. Math. Soc.} {\bf 5}, 216 (1965). 
\bibitem{Segal} I. E. Segal, The span of the translations of a function in a Lebesgue space, {\em Proc. Nat. Acad. Sci. U.S.A.} {\bf 30}, 165 (1944).
\bibitem{Beurling} A. Beurling, On a closure problem, {\em Arkiv för Mat.} {\bf 1}, 301 (1950).
\bibitem{Herz1957} C. S. Herz, A Note on the Span of Translations in $L^p$, {\em Proc. Am. Math. Soc.} {\bf 8}, 724 (1957).
\bibitem{LevOlev} N. Lev, and A. Olevskii, Wiener's 'closure of translates' problem and Piatetski-Shapiro's uniqueness phenomenon, {\em Ann. Math.} {\bf 174}, 519 (2011).
\bibitem{HdorffDim} F. Hausdorff, Dimension und \"au{\ss}eres Ma{\ss}, {\em Math. Ann.} {\bf 79}, 157 (1918).
\bibitem{Deny1950} J. Deny, Les potentiels d'energie finie, {\em Acta Math.} {\bf 82}, 107 (1950).
\end{thebibliography}

\end{document}